\newtheorem{theorem}{Theorem}[section]
\newtheorem{lemma}[theorem]{Lemma}
\newcommand{\etal}{{\it et.~al}\xspace}
\newif\ifmark
\newif\ifhidenote
\newif\ifspace
\newcommand{\del}[1]{\sout{#1}}
\newcommand{\note}[1]{}
\newcommand{\note}[1]{{\sffamily\itshape\bfseries\uline{#1}}}
\newcommand{\del}[1]{}
\newcommand{\note}[1]{}
\newcommand{\name}{{AuditShare}\xspace}
\begin{document}

\title{\name: Sensitive Data Sharing with Reliable Leaker Identification}

\author{\IEEEauthorblockN{Zhiyi Zhang}
\IEEEauthorblockA{UCLA\\
zhiyi@cs.ucla.edu}
\and
\IEEEauthorblockN{Yu Guan}
\IEEEauthorblockA{Peking University\\
shanxigy@pku.edu.cn}
\and
\IEEEauthorblockN{Xinyu Ma}
\IEEEauthorblockA{UCLA\\
xinyu.ma@cs.ucla.edu}
\and
\IEEEauthorblockN{Lixia Zhang}
\IEEEauthorblockA{UCLA\\
lixia@cs.ucla.edu}}


\maketitle

\begin{abstract}
	As Personally Identifiable Information (PII) data sharing among multiple parties becomes increasingly common, so does the potential for data leakage.
	As required by new data protection regulations and laws, when PII leakage occurs, one must be able to reliably identify the leaking sources.
	Existing solutions utilize watermark technologies or data object allocation strategies to differentiate the data shared with different parties to identify potential leakers.
	However, these solutions lose their effectiveness under several attack scenarios, e.g., a data sender may leak the data and a receiver may deny the reception of certain shared data.
	Worse yet, multiple receivers might collude and apply a set of operations such as intersection, complement, and union to their received datasets before leaking them, making the task of leaker identification even more difficult.
	
	In this paper, we propose \name, a PII dataset sharing system with reliable leaking source identification.
	Firstly, taking advantage of the intrinsic properties of PII data, \name allocates data objects to individual sharing parties by PII attributes.
	Secondly, \name obliviously transfer data between the sender and each receiver and uses a Merkle Tree as an immutable record of the sharing.
	Thirdly, a knowledge-based identification algorithm is proposed to identify a guilty sender or colluding/non-colluding receivers.
	Through our evaluation, we show that:
	\begin{enumerate*} [label=(\roman*)]
		\item With a modest amount of leaked data, \name can accurately (accuracy$>$99.99\%) and undeniably identify all the guilty parties in different cases;
		
		\item It only takes 0.5 second to share 100,000 data objects in \name, which is practical in real-world deployment.
	\end{enumerate*}
	
\end{abstract}

\section{Introduction}
\label{sec:intro}

As the society becomes increasingly online, sharing Personally Identifiable Information (PII)~\cite{nistpii2010} becomes increasingly needed.
For example, according to the US National Coordinator for Health Information Technology (ONC), more than 60\% hospitals in the US share patient information with health care providers~\cite{hospital-share} to benefit patients, e.g., to avoid duplicate tests.
Another common use case is schools sharing student information with online educational applications for personalized services, according to the U.S. Department of Education~\cite{school-share}.

PII data contains PII attributes, e.g., full name, email address, etc., which can be used to identify a particular person, so the protection of PII data has become a major concern.
For example, according to~\cite{breach-report}, 342 breaches of health information have been reported in the first nine months in 2019, affecting more than 39 million individuals in the US.
Under this background, Europe's General Data Protection Regulation (GDPR)~\cite{gdpr} and California's Consumer Privacy Act (CCPA)~\cite{ccpa} all contain provisions and requirements for PII data protection. If one is identified as a PII leaker, he/she will face serious punishment.

As a result, reliable leaking source identification in PII dataset sharing helps to prevent data breaches and is a key enabler for the enforcement of new regulations.
The problem can be formally described as follows:
a data owner (called \emph{the sender}) shares its PII dataset with a number of agents (called \emph{receivers}), where both sides agree that the shared data should never be disclosed to any third party.
When a PII dataset leakage happens, offenders must be identified.

A number of solutions to the leaker identification problem in dataset sharing have been developed in recent years.
At this time, the basic approach is to differentiate the datasets received by different receivers in the data sharing process.
For example, the sender can allocate different portions of data across receivers~\cite{data-leakage-detection}\cite{modelfordataleakage}\cite{bigraph}\cite{ kumar2014detection}, or watermark the original data~\cite{agrawal2002watermarking}\cite{sion2005rights}\cite{qin2006watermark}\cite{zhang2006relational} before sharing it with each receiver.
In this way, the sender can trace the leaker(s) by matching the data entries or checking the watermark of the leaked data.

However, the traceable differentiation proposed in the existing works \emph{loses its effectiveness} in identifying leakers in several cases, making these solutions less practical.
\begin{enumerate}  [leftmargin=*]
	\item Because of dishonest employees~\cite{amazon-employee}\cite{capital-one-employee} or vulnerable firewalls~\cite{british-airways}, the sender itself could be the leaker.
	In this case, existing solutions do not work because they rely on the sender to create differentiation among receivers.
	Worse yet, it is even possible for the sender to frame a receiver because the sender has the upper hand by virtue of being fully aware of the differentiation.
	
	\item Collusion can help leakers to figure out the differentiation by comparing their received copies, so that leakers can manipulate the differentiation (e.g., data transformation, numeric value randomization) or only leak the overlapping data in order to circumvent leaker identification.
	
	\item The differentiation can be denied if there is a lack of evidence for which data was received by which receiver.
\end{enumerate}

In this paper, we propose \name, a data-sharing system with reliable leaker identification in PII dataset leakage.
Intuitively, \name allocates data objects by data entry's \emph{PII attribute(s)} across the receivers in such a way that no single party is fully aware of the allocation results; at the same time, nobody can deny or lie about which data objects it has received.
More specifically, the design of \name includes three basic components, addressing the issues in the existing solutions.
\begin{itemize}  [leftmargin=*]
\item \textbf{Obliviously Allocating Objects by PII Attributes.}
The sender organizes the raw dataset into data objects by PII attribute(s) and then obliviously allocate data objects among receivers.
With Oblivious Transfer (OT)~\cite{firstOT}, data objects will be partially shared with each receiver in such a way that the sender cannot know which objects have been received and the receiver is not aware of the remaining objects.
To compensate the data loss caused by OT, \name leverages fake but indistinguishable data objects~\cite{modelfordataleakage} to ensure that each receiver can receive all the original data objects eventually.

\item \textbf{Identifying Leakers based on Leakers' Knowledge.}
When a leakage happens, \name can identify\footnote{When the leaker entirely erases PII attributes from the leaked dataset, our proposed knowledge-based algorithm cannot identify the leakers; nevertheless, in this case, the leaked dataset becomes non-PII data according to data protection regulations.} the leaker or a number of colluding leakers by inferring the knowledge of the leaker(s) from the leaked data objects.

\item \textbf{Immutably Recording the Data Sharing.}
During the data sharing, \name immutably records the result of the data sharing, preventing any party from lying or denying what data objects have been transferred/received.

\end{itemize}

We implement a prototype system of \name based on 1-out-of-2 OT extension~\cite{asharov2013more} and evaluate it with real-world datasets.
It only takes about 0.5 seconds to transfer 100,000 data objects (each object is about 90 Bytes) to a receiver, showing \name's practicality.
By checking a modest number of leaked data objects, \name obtains $>$ 99.99\% identification accuracy in leakages where there exists
\begin{enumerate*} [label=(\roman*)]
	\item a guilty sender,
	\item a non-colluding guilty receiver,
	or
	\item a number of colluding receivers.
\end{enumerate*}

\vspace{3mm}
\noindent\textbf{Contributions} 
The contributions of our work are threefold.
\begin{enumerate}  [leftmargin=*]
	\item We propose a OT-based sharing system for PII data and a way to immutably record sharing results.
	To the best of our knowledge, \name is the first allocation based sharing system that is capable of identifying a guilty sender.
	
	\item We propose a knowledge-based leaker identification algorithm that is able to identify colluding leakers.
	
	\item We implemented a prototype of \name and evaluated its efficiency and effectiveness, showing that \name can accurately and undeniably detect all leakers.
\end{enumerate}

\vspace{3mm}
\noindent\textbf{Roadmap}
In the rest of the paper, we use an example scenario where a school shares student information with several online educational services to show how \name works.
We introduce the system model in Section~\ref{sec:model} and provide an overview of \name in Section~\ref{sec:overview}.
After that, we introduce the design of each component in \name in Section~\ref{sec:preprocessing}, \ref{sec:sharing}, and \ref{sec:detection}.
We evaluate our work in Section~\ref{sec:evaluation} and have a discussion in Section~\ref{sec:discussion}.
We introduce the related work in Section~\ref{sec:related} and conclude our work in Section~\ref{sec:conclusion}.
\section{System Model}
\label{sec:model}

In this work, we consider a real world example described as followed to illustrate \name.
A school $S$ as the sender wants to share its student information with $N$ different online educational services $R_1, R_2, ... R_N$ as receivers so that students can enjoy personalized services based on their own situation.
The dataset being shared is a structured dataset that contains PII, such as student ID, student name, etc.
When student information from $S$ is found unexpectedly (e.g., found in public Internet), the goal is to identify the party/parties responsible for the leakage.

\subsection{Threats Model}
In this paper, we consider both the unintentional and intentional data leakage.
So, based on the data sharing scenario, we define the threats model as follows:

\begin{itemize} [leftmargin=*]
	\item\emph{A Guilty Sender.}
	The data sender may accidentally disclose its information, e.g. due to dishonest employees~\cite{amazon-employee} or a vulnerable campus firewall. 
	It is also possible that, when a leakage occurs, the sender may frame a receiver by taking advantage of its complete knowledge about the data sharing.
	
	\item\emph{Guilty receivers with/without collusion.}
	\label{sec:model:threat:collusion}
	We assume that a receiver may leak part or all of its received data, and there may be more than one guilty but non-colluding receivers at the same time.
	Moreover, multiple parties may collude to gain advantages over the leakage identification system and escape the responsibility.
	Specifically, colluding parties can share their received dataset and disclose a combination of their datasets, where we assume the combination includes:
	\begin{enumerate*} [label=(\roman*)]
		\item taking the union, intersection, or complement of any number of available datasets
		and
		\item randomly picking a number of data objects.
	\end{enumerate*}
	We consider the cases when the sender also participates in the collusion (Section~\ref{sec:discussion:sender}).
	However, we assume the sender does not collude with all the receivers (i.e., no party is honest).

	\item\emph{Data Manipulation.}
	To avoid being caught by watermark-based identification approaches, malicious receivers can manipulate their received dataset, e.g., changing students' numeric attributes like grades, ages, etc.
	However, we assume such manipulation will not erase the PII, otherwise such leakage may not be considered a crime by laws.
	
	\item\emph{Repudiation.}
	After data leakage, guilty parties may deny the facts of what has been sent and received.
	For example, to escape the responsibility, a receiver who leaked the data can deny the receipt of a specific portion of the data and a guilty sender may lie about what has been sent.
\end{itemize}

Specifically, in respect of the collusive leakage, our proposed system aims to capture each colluding receiver who \emph{contributed} to the leakage.
If a corrupted receiver $R$ does not contribute to a leakage, that is, if other colluding receivers can have sufficient knowledge to make such a leaked dataset even without $R$'s participation, \name cannot identify $R$ as a leaker.

\subsection{Assumptions}
We make the following assumptions when designing \name.

\begin{itemize} [leftmargin=*]
\item \emph{Leaked Data contains Identifying Information.}
We assume the leaked data is PII data, which means the leaked data contains PII attributes (e.g., Social Security Number, full name).
\name cannot identify leaking sources from leaked data with no PII attributes.
In fact, such a leakage is not considered as a PII leakage crime by regulations like GDPR; identifying leaking sources in this scenario is out of this paper's scope.

\item \emph{Trusted Third Parties.}
In \name, we assume an honest-but-curious \emph{Notary} who will certify the sharing process and a fully-trustworthy \emph{Arbitrator} who will run the leaker identification algorithm in the case of a leakage.
In \name, the Notary cannot gain any information about the data being shared.
If there is no leakage, the Arbitrator is not needed; however, when there is a leakage, the Arbitrator will is able to learn all the data objects.

\item \emph{Adequate Number of Leaked Data Objects.}
We assume an adequate number of leaked objects are known to \name's identification algorithm in the event of a leakage.
A very small number leaked objects are not sufficient for \name to identify leakers.

\item \emph{Fake Data's Influence on Receiver Applications.}
We assume these fake but indistinguishable data objects will not affect receiver applications much, and thus \name does not apply to application scenarios that strictly require no fake objects at receiver side.
A more detailed discussion on this assumption is in Section~\ref{sec:discussion:fake}.

\end{itemize}
\section{\name Overview}
\label{sec:overview}

Our proposed system functions by allocating data objects by their PII attributes across receivers with 1-out-of-2 Oblivious Transfer (OT). 
1-out-of-2 OT is a cryptographic protocol in which a sender sends two messages and the receiver will receive one; at the same time, the sender cannot know which message has been received and the receiver is oblivious to the value of the other message.
The use of OT prevents the sender from fully knowing the allocation result, so the sender cannot frame a receiver.
During the sharing, the whole process will be proved immutably by a Merkle-Tree-based credential signed by the notary, preventing any party from denying or lying about the data allocation.
In the event of a data leakage, \name can identify all leakers, including colluding receivers or a guilty sender, with high accuracy by reversely inferring leakers' knowledge from the leaked data.

\name consists of three main steps.

\textbf{1. Data Preprocessing.}
In \name, data preprocessing does two main tasks.
\begin{enumerate*} [label=(\roman*)]
	\item A sender groups original data entries with same PII attributes into \emph{data objects}, each of which is associated to an individual and will be shared as a whole in a later sharing process.
	
	\item Then the sender prepares a number of fake data objects, which are used for potential leaker detection.
	A sufficient number of fake data objects are required to guarantee a high accuracy of leaker identification; the more fake data objects are used, the higher the accuracy can be achieved.
\end{enumerate*}

\textbf{2. Data Sharing.}
After Data Preprocessing, a 1-out-of-2 OT is applied for the sender to share the \emph{data objects}.
Importantly, each receiver will get all the real \emph{data objects} and a random subsect of \emph{fake data objects} while each receiver's collection of \emph{fake data objects} keeps unknown to both the sender and the receiver.
In this process, a trust-but-curious Notary will witness the transfer and publish a certificate as an immutable proof of the sharing result.

\begin{figure}[t]
	\centering
	\includegraphics[width=0.46\textwidth]{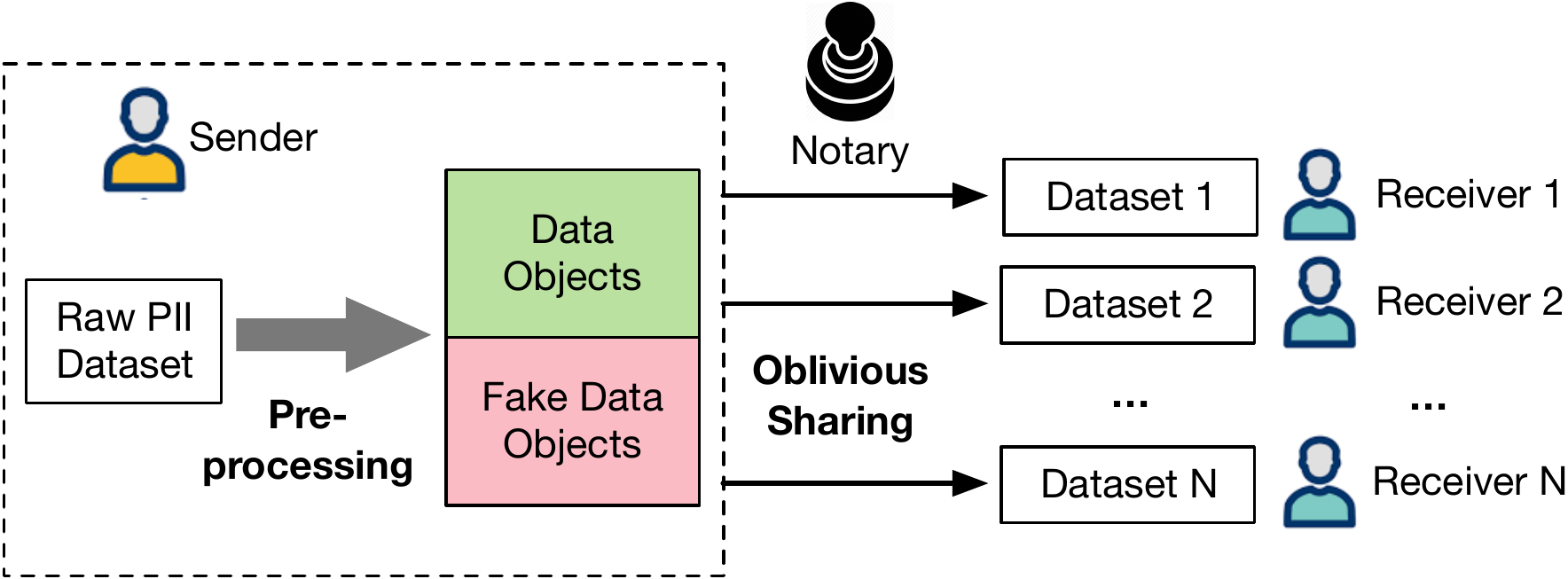}
	\caption{Data Preprocessing and Sharing}
	\label{fig:overview}
	\vspace{-0.5cm}
\end{figure}

\textbf{3. Knowledge-based Leaker Identification.}
In a data leakage, an Arbitrator (e.g., a court, a government agent, or even a smart contract) will trigger the leaker identification process.
All the involved parties will provide the Arbitrator with their received \emph{data objects} and the Notary will disclose the confidential part of sharing certificates to the Arbitrator.
With all the claimed received \emph{data objects} verified, the Arbitrator then triggers the algorithm to identify all the guilty parties.
In detail, the algorithm starts with the hypothesis that a party is not guilty.
Under this hypothesis, the algorithm uses inferential statistics to evaluate the probability of having such a leaked dataset, thus judging whether the party is guilty or not.

Through these three steps, the auditability is built into the sharing system:
By employing 1-out-of-2 OT, each received copy is a different collection from all others; even a combination of several received datasets is also distinguishable from other combinations.
Additionally, no single party is aware of the allocation among receivers.
With the certificate published by the Notary at the end of a sharing process, no party is able to cheat in a later arbitration.
Due to these properties, in the case of a data leakage, our knowledge-based identification algorithm is able to effectively and undeniably identify all leaking sources.
By introducing fake but indistinguishable fake objects, each received dataset contains all the data objects in the sender's original dataset. 
This ensures data receivers are able to provide services to all involved users (e.g., all students of school $S$).

\section{\name: Dataset Preprocessing}
\label{sec:preprocessing}

The purpose of data preprocessing is to prepare data objects for a later sharing process.
Each data object is a collection of data associated to an individual, e.g., each student in school $S$.
In order to allocate different data to receivers, the sender will also generate fake data objects, e.g., data associated with nonexistent students.
Such a process can be illustrated by Figure~\ref{fig:preprocessing}.

\begin{figure}[h]
	\centering
	\includegraphics[width=0.4\textwidth]{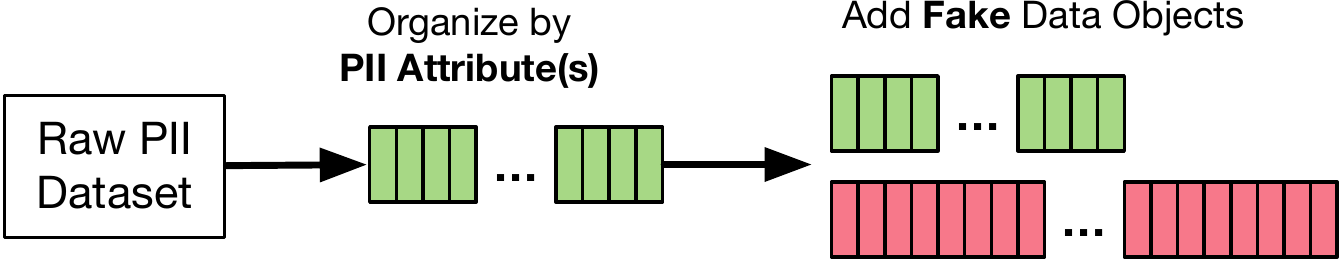}
	\caption{Data Preprocessing}
	\label{fig:preprocessing}
\end{figure}

\subsection{Identifying Data Objects}

A sender first groups raw data entries (e.g., database of student profiles) with the same PII attribute into one \emph{data object}.
Therefore, each object may contain one or more data entries; to combine multiple entries into one object, the sender can simply concatenate multiple entries or perform join operations among tables on PII attributes.
When sharing data with a receiver, a data object is the smallest unit of the allocation; that is, all data entries in each object will be shared as a whole.

To be more specific, the sender will need to select PII attributes as the basis to group the data entries.
Considering a PII dataset, there are two possible scenarios: there exists either a single PII attribute (e.g., student ID) or multiple PII attributes (e.g., student ID and full name).
In the former case, the single attribute will be selected as the basis, while in the second case, a concatenation of all PII attributes will be used (Figure~\ref{fig:attribute-selection}).
In a potential leaker identification process, in case a leaker strips or changes some of PII attributes in leaked data, a leaked object is matched if any of its PII attributes matches the original object.

\begin{figure}[h]
	\centering
	\includegraphics[width=0.45\textwidth]{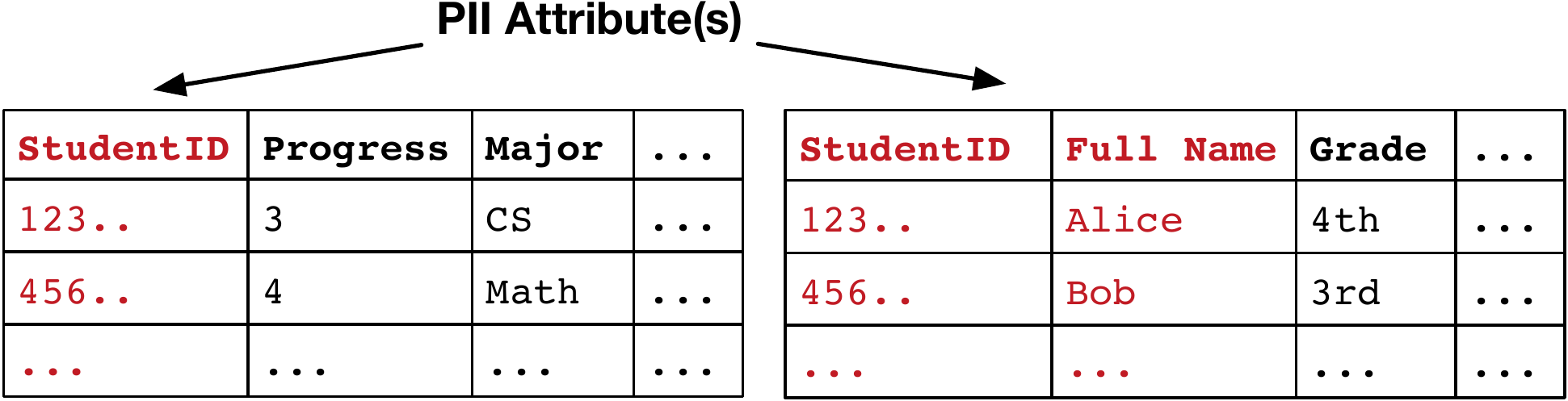}
	\caption{The Selection of PII Attribute(s)}
	\label{fig:attribute-selection}
\end{figure}

\subsection{Preparing Fake Data Objects}
\label{sec:prepare_fake}

In \name, the leaker identification is based on data allocation.
To eliminate data loss caused by the allocation and improve the identification accuracy, the sender will generate fake data objects for the purpose of allocation.
These fake objects do not correspond to real entities but appear realistic to data receivers.
The fake objects can be generated by different means (e.g., differential privacy approaches) and the details are out of the scope of this paper.

Since fake objects are used for the allocation, in general, the more fake objects are generated, the higher the leaker identification accuracy can be achieved.
In \name, we define the lower bound of the quantity of fake objects to be \emph{twice} the size of real data objects.
This is to ensure no real data loss in a later OT as shown in Section~\ref{sec:sharing}.
However, a minimum quantity of fake objects may not ensure a good identification accuracy.
On the other hand, too many fake objects are not needed because the accuracy is high enough and extra objects will lead to a high sharing overhead.

In \name, the proper fake object quantity depends on the total number of receivers and the expectation on identification accuracy.
For example, with 6 receivers and $20,000$ fake objects, when 5\% of the received data is leaked, a 99.99\% accuracy in leaker identification can be guaranteed (Section~\ref{sec:evaluation}).
However, when there are 9 receivers in total, to achieve the same level of accuracy, about $200,000$ fake objects are needed.
Section~\ref{sec:discussion:fake-size} will discuss how to decide the size of fake objects when an expected identification accuracy and the total number of receivers are given.
\section{\name: Data Sharing}
\label{sec:sharing}

After the sender finishes preprocessing data, all original and fake data objects will be shared through 1-out-of-2 Oblivious Transfer (OT), and as a result, each receiver will obtain all real objects and a random part of fake objects while the allocation of fake objects among receivers keeps unknown to the sender, preventing the sender from framing an innocent receiver.

\begin{figure}[h]
	\centering
	\includegraphics[width=0.33\textwidth]{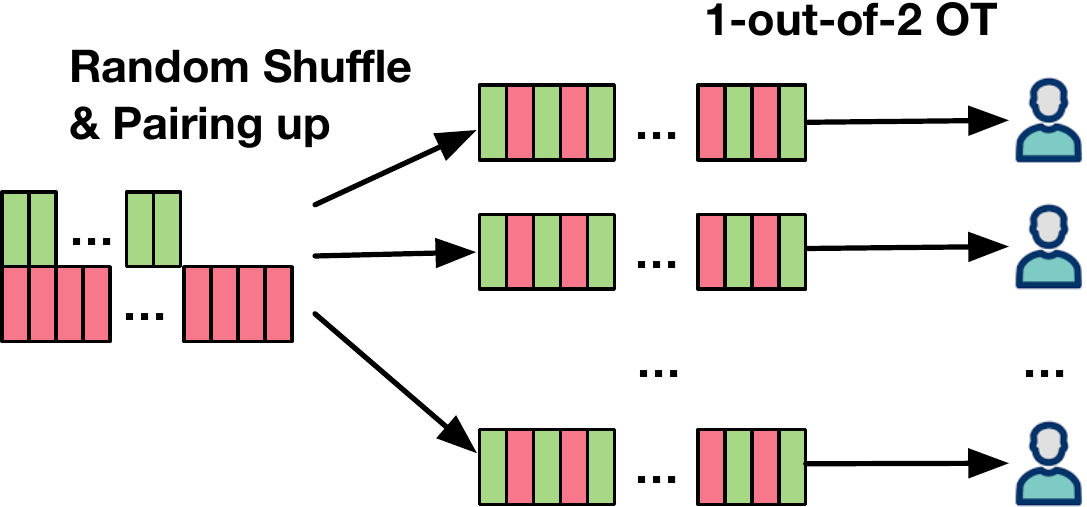}
	\caption{Oblivious Sharing}
	\label{fig:oblivious-sharing}
\end{figure}

\subsection{Oblivious Transfer}

An OT protocol is a cryptographic protocol in which the sender transfers many pieces of information to a receiver but remains oblivious to which piece has been received.
At the same time, the receiver learns nothing but the selected information.
In a classic 1-out-of-2 OT, the sender inputs two messages $m_0$ and $m_1$ while the receiver has a secret bit $b \in \{0, 1\}$.
The receiver will learn $m_b$ while $b$ remains unknown to the sender at the end of the OT.

In this paper, we use the 1-out-of-2 OT.
Since the size of data objects in a PII dataset can be very large, a large number of OTs may be involved in \name.
Therefore, in our prototype implementation and real-world deployment, an OT extension will be used for better efficiency.
As pointed in~\cite{asharov2013more}, the performance bottleneck of oblivious transfer has been in the network rather than the computation overhead.

\subsection{Data Sharing with OT}
In a situation when fake objects are exactly twice the size of real objects, the sender binds each real data object $d_i$ with two different fake objects $f_{i,0}$ and $f_{i,1}$.
After that, the sender creates two tuples $m_0 = <d_i, f_{i,0}>$ and $m_1 = <d_i, f_{i,1}>$ that will be transferred obliviously to the receiver.
As noticed, $m_0$ and $m_1$ have the same real object but contain different fake objects.
As a result of 1-out-of-2 OT, a receiver can only get $m_b$ ($b = 0$ or $1$ represents the receiver's choice).
$m_b$ contains two objects that are indistinguishable to the receiver.
All remaining data objects will be shared in the same way.

When fake data objects are more than twice the size of real objects, extra fake objects should be uniformly distributed into each pair of $m_0$ and $m_1$.

In \name's OT-based sharing, all messages should have the same length in order to prevent side channel attacks.
For this purpose, simple bit padding mechanisms can be used.

\subsection{Provable Sharing}
To make an immutable proof of a data sharing process, \name requires a trusted third party as a \emph{Notary} to generate a credential, so that in a potential arbitration phase, no party can deny the result of the sharing.

In fact, having a notary in commercial data sharing has become a common practice.
In \name, a Notary will generate a certification on what data objects have been transferred without gaining any knowledge of the exact PII information.
To be more specific, the sender, all receivers, and the Notary will obey the following protocol during a data sharing.
\begin{enumerate} [leftmargin=*]
    \item Before a sharing, the sender creates a three-layer Merkle tree, called an \emph{Sending Tree}, which is defined as,
    \begin{center}
        $SendingTree = H(... ||~H(H(m_{i,0})||H(m_{i,1}))~|| ... )$
    \end{center}
	for each message pair $i$ sent by the sender.
    Each two paired OT messages share a parent node and all parent nodes are children of the root node.
    The sender provides the Notary with all nodes in its Sending Tree, which are sufficient to snapshot all the OT messages sent.

    \item Before the sharing, a receiver provides the Notary with choices that will be used in the OT.

    \item After the sharing, the receiver calculates a two-layer Merkle tree, called a \emph{Receiving Tree}, on all the received messages.
    A Receiving tree is defined as,
    \begin{center}
        $RecivingTree = H(... ||~H(m_{i, b})~|| ... )$
    \end{center}
    for each message pair $i$ sent by the sender, where $b$ is receiver's choice in the 1-out-of-2 OT of $i$.
	The receiver then sends the root node value of its \emph{Receiving Tree} to the Notary.

    \item The Notary computes another \emph{Receiving Tree} based on the \emph{Sending Tree} and choices provided by the receiver.
    It then compares the root node with the root node value of the \emph{Receiving tree} provided by the receiver in step 3.
\end{enumerate}
If the comparison in step 4 is successful (i.e., two identical values), the sharing is finished and the Notary will issue a certificate.
Such a certificate contains
\begin{enumerate*} [label=(\roman*)]
	\item an encrypted root node value of the \emph{Sending Tree},
	along with
	\item an encrypted list of choices made by the receiver.
\end{enumerate*}
The Notary will sign the credential and permanently keep it; for example, it can serialize the credential to paper copies or insert it into an immutable ledger.
On the other hand, if the comparison fails, the receiver should delete all the received data with the Notary as a witness and a new sharing process should be started.

In the sharing process, a Notary obtains zero knowledge of the actual data being transferred because of the one-way hash function used in Merkle tree.

\begin{figure*}[t]
	\centering
	\centering
	\begin{subfigure}{.23\textwidth}
		\centering
		\includegraphics[width=0.8\textwidth]{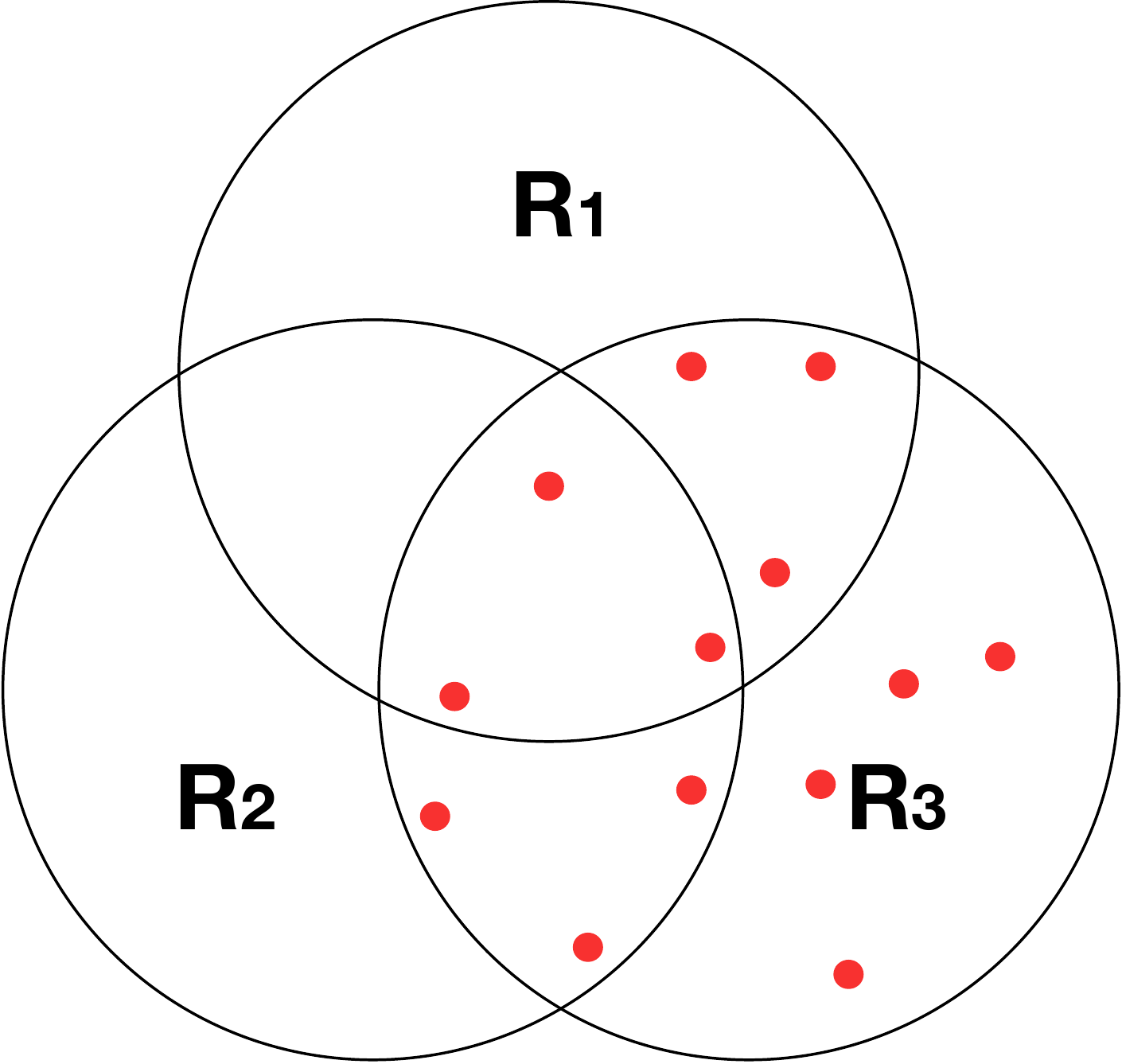}
		\caption{$R_3$ is the leaker}
		\label{fig:arbitration-example-single}
	\end{subfigure} %
	~
	\begin{subfigure}{.23\textwidth}
		\centering
		\includegraphics[width=0.8\linewidth]{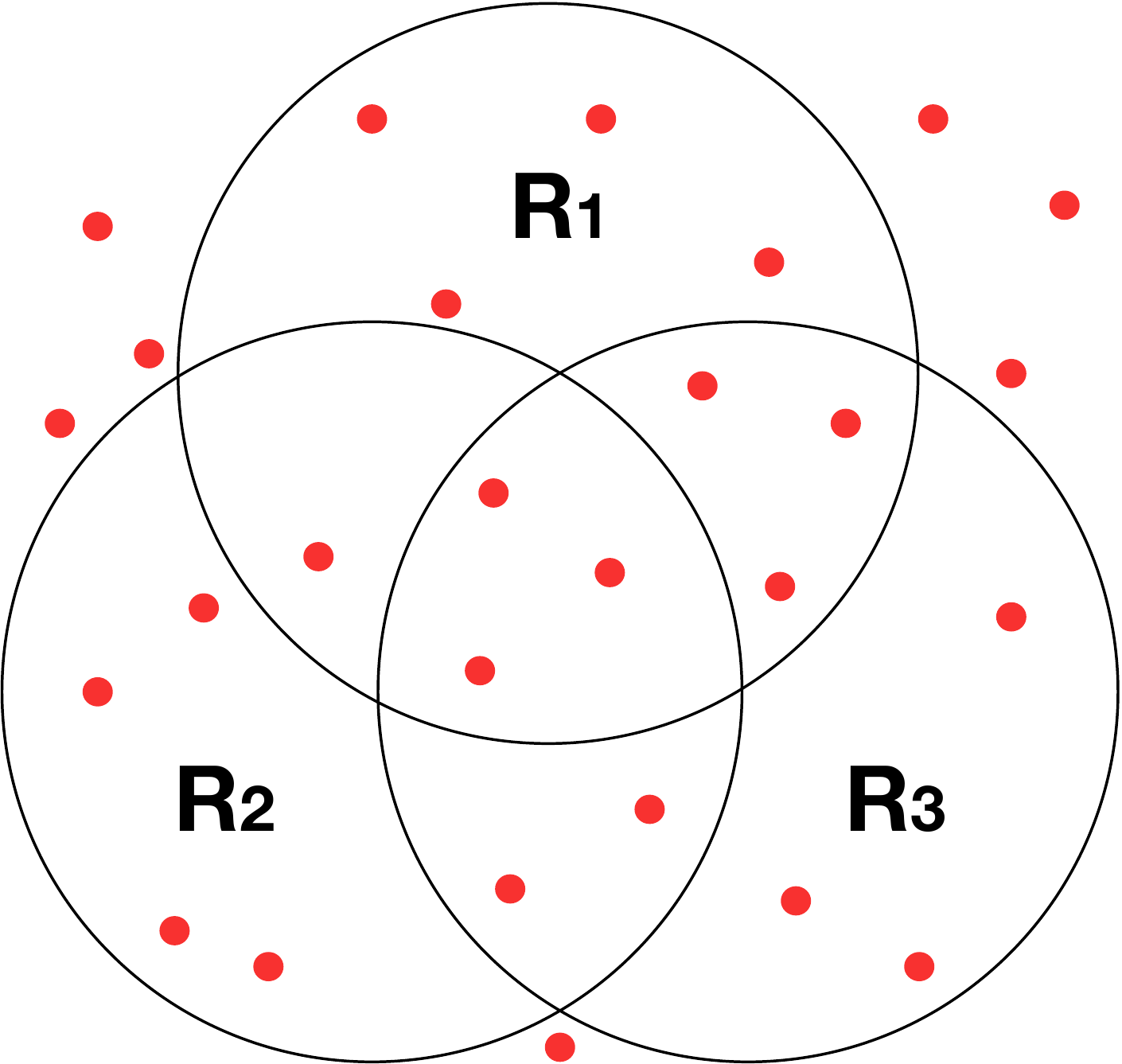}
		\caption{Sender is the leaker}
		\label{fig:arbitration-example-sender}
	\end{subfigure}
	~
	\begin{subfigure}{.23\textwidth}
		\centering
		\includegraphics[width=0.8\linewidth]{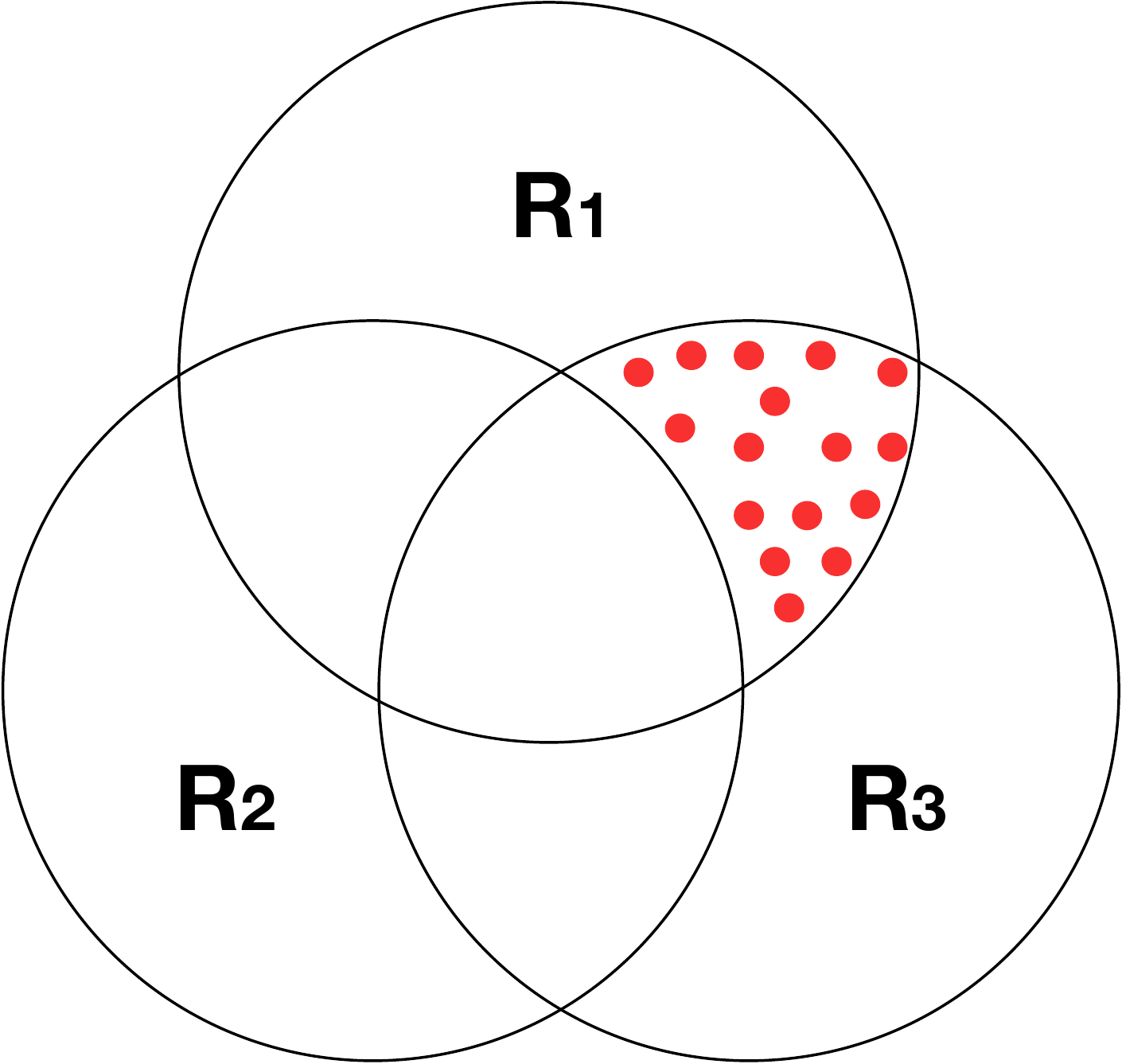}
		\caption{All receivers collude}
		\label{fig:arbitration-example-collude}
	\end{subfigure}
	~
	\begin{subfigure}{.23\textwidth}
		\centering
		\includegraphics[width=0.8\linewidth]{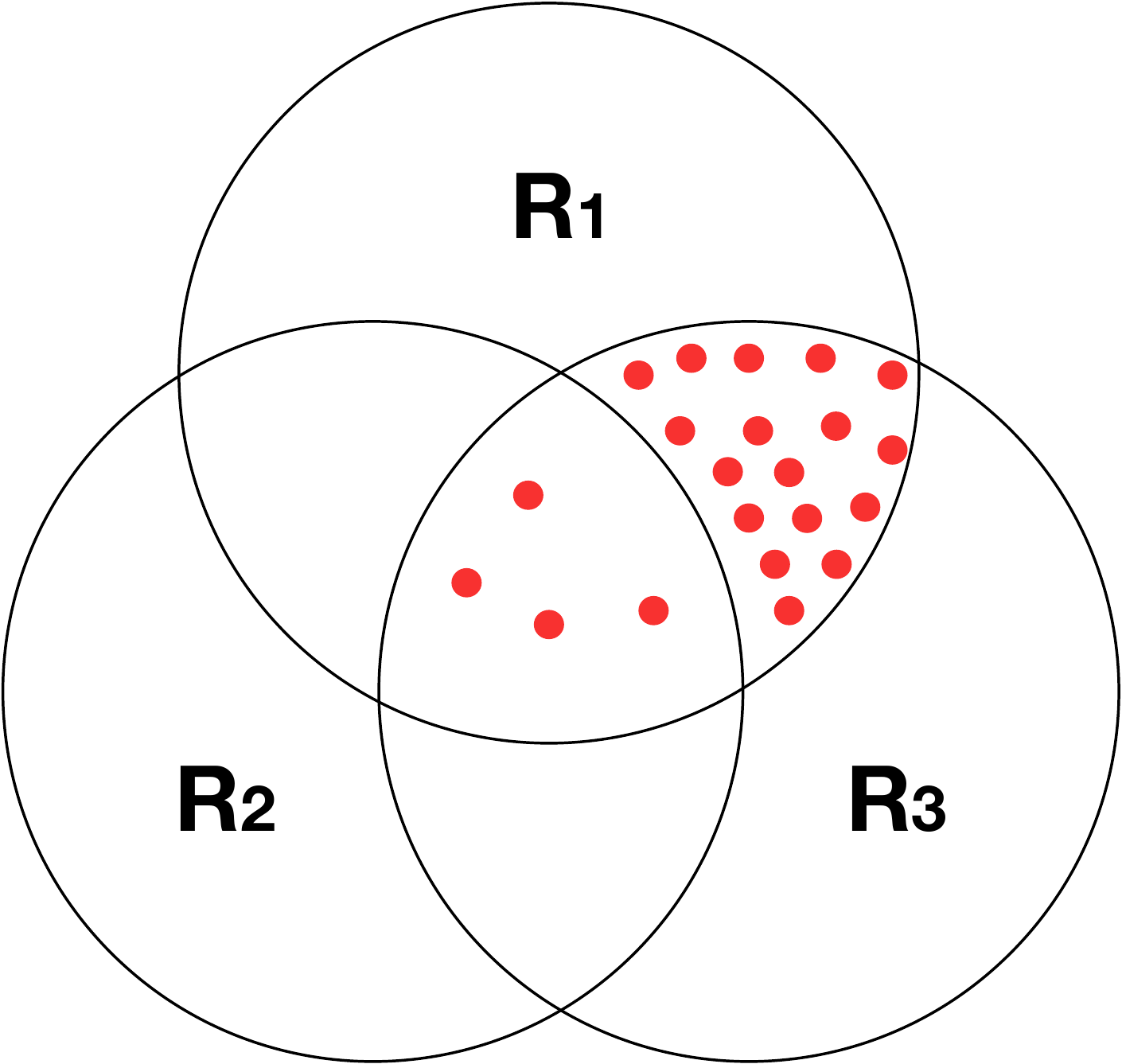}
		\caption{All receivers collude}
		\label{fig:arbitration-example-collude2}
	\end{subfigure}
	\caption{Knowledge-based Leaker Identification (Circles are receivers' knowledge and red points are leaked data objects)}
	\label{fig:arbitration-example}
\end{figure*}

\section{Leaker Identification after Data Leakage}
\label{sec:detection}

\name identifies leakers by analyzing the leaked data and finding the involved parties whose knowledge is sufficient to create such a leaked dataset.
Specifically, since all receivers will have the complete set of real objects, all the objects discussed in this section refer to fake data objects.
The notations used in this section are listed in Table~\ref{tab:notation}.

\begin{table}[t]
	\caption{Notation Table}
	\small
	\renewcommand{\arraystretch}{1.2}
	\label{tab:notation}
	\centering
	\begin{tabular}	{ | p{1.2cm} | p{6.5cm}| } 
		\hline
		$N$ & number of data receivers\\ 
		\hline
		$Data$ & collection of all the fake data objects \\ 
		\hline
		$Data'$ & leaked fake data objects \\ 
		\hline
		$Recv_i$ & fake data objects received by receiver $R_i$, $i > 0$ \\ 
		\hline	
		$Unique_i$ & fake data objects that are only received by $R_i$ \\ 
		\hline	
		$Drop_i$ & fake data objects dropped by $R_i$ in the OT\\ 
		\hline	 
	\end{tabular}
	\vspace{-0.5cm}
\end{table}

To provide some intuitive ideas of how \name's identification works, we show four leakage examples in Figure~\ref{fig:arbitration-example}.
The dots in the figure are the leaked fake data objects while each circle represents all fake data objects received by each receiver.
The intersection of circles are the objects that are received by multiple receivers.
As shown, if leaked data objects are uniformly distributed, as in Figure~\ref{fig:arbitration-example-single}, in receiver $R_3$'s circle, we can easily infer $R_3$ to be the leaker.
When leaked objects are like Figure~\ref{fig:arbitration-example-sender}, the sender is guilty because some leaked objects have never been received by any receiver (dots not falling into any circle).
If objects are distributed in a specific section as shown in Figure~\ref{fig:arbitration-example-collude}, it is trivial to deduce the leaked data is from $Recv_1 \cap Recv_3 \cap Recv_2^{\complement}$, which means all three receivers collude in the leakage.
Figure~\ref{fig:arbitration-example-collude2} is a more complicated scenario where leaked objects are all from $Recv_1 \cap Recv_3$ but with a non-uniform distribution.
Under this scenario, we can still infer that $R_2$ participates in the collusion because $R_1$ and $R_3$ should not have the knowledge to distinguish objects that belong to $R_2$ without $R_2$'s help. 

Following this idea, the leaker identification phase in \name consists of two simple steps.
\begin{enumerate} [leftmargin=*]
	\item \textbf{Confirming Sharing.}
	The involved sender and receivers provide the Arbitrator with their datasets sent/received.
	The Notary also sends decryption keys to the Arbitrator to access the confidential part of certificates.
	In this way, the Arbitrator is able to ensure no party can deny or lie about the result of sharing.
	
	\item \textbf{Identifying Leakers.}
    Once the sharing result is confirmed, the Arbitrator knows the exact allocation of fake data objects among receivers.
	According to the knowledge used to construct the leaked data, \name evaluates the innocence of every involved party and identifies the leaker(s) with a very low error rate.
\end{enumerate}
The first step is to ensure that the Arbitrator can get the correct information of the sender and each receiver's knowledge while the second step is to identify the leakers based on each involved party's knowledge.

The first step is straightforward: the Arbitrator checks the Merkle tree and receiver choices against the datasets provided by the sender and receivers. 
Therefore, in the remaining of this section, we will describe the identification process, that is, given the exact data allocation, how to identify 
\begin{enumerate*} [label=(\roman*)]
	\item a non-colluding leaker (the sender or a receiver),
	and
	\item all colluding leakers.
\end{enumerate*}

\subsection{Identifying Non-colluding Leakers}

\noindent\textbf{Algorithm Summary}
\name takes the following steps to identify each non-colluding leaker, which can be the sender or a receiver.
\begin{enumerate} [leftmargin=*]
	\item The algorithm first figures out each receiver's 
	\begin{enumerate*} [label=(\roman*)]
		\item unique data objects that are never received by other receivers
		and 
		\item the objects that are dropped during the OT.
	\end{enumerate*}
	
	\item For each receiver, the algorithm detects 
	\begin{enumerate*} [label=(\roman*)]
		\item the overlap between leaked objects and unique objects
		and
		\item the overlap between leaked objects and dropped objects.
	\end{enumerate*}
	If the first is nonempty but the second is empty, we consider this receiver is guilty.
	
	\item Finally, to verify the sender's honesty, the algorithm calculates the objects that are only known to the sender (i.e., dropped by all the receivers during OT).
	If there exists an overlap between sender-only objects and leaked objects, we consider the sender guilty too.
\end{enumerate}

\noindent\textbf{Algorithm Rationales}
Based on the data objects allocation, both the sender and each receiver have a number of fake objects that are never received by any other parties, and therefore these objects can serve as the clue of non-colluding leakers (the sender or non-colluding receivers).
We use $Unique_i$ to represent the objects that are only received by a receiver $R_i$ and $SenderOnly$ to refer to those objects that are not received by any receiver.
Moreover, by the property of OT, objects dropped by each receiver $R_i$ during the OT , which are denoted by $Drop_i$, are not known to any party.

Therefore the following statements hold when there is no collusion.
Firstly, $R_i$ is a leaker by a large probability if leaked data objects $Data'$ contain objects that are also in $Unique_i$ but no objects from $Drop_i$.
Specifically, $Unique_i$ can be calculated by
	\begin{align*}
	Unique_i = Recv_i \cap \bigg(\bigcup_{j \neq i} Recv_j \bigg)^{\complement} &&\mbox{(6.1.1)}
	\end{align*}
Secondly, $S$ must be a leaker if $Data'$ contains objects from $SenderOnly$.
Specifically, from the data allocation process, we have:
	\begin{align*}
	SenderOnly = 
	(Drop_1 \cap Drop_2 ... \cap Drop_N) &&\mbox{(6.1.2)}
	\end{align*}
	where $EO_i$ represents the $u$ exclusive objects allocated to $R_i$.

\vspace{3mm}
\noindent\textbf{Pseudo Code and Proof}
Algorithm~\ref{alg:single-leaker} shows the pseudo-code of \name's algorithm of non-colluding leaker identification.

\begin{algorithm}[h]
	\caption{Identifying a Non-colluding Leaker}
	\label{alg:single-leaker}
	\begin{algorithmic}[1]
		\Require \\
		Leaked fake data objects $Data'$ \\
		Each receiver $R_i$'s fake data objects, $Recv_i$ \\
		Fake data objects dropped by $R_i$ in the sharing, $Drop_i$
		\Ensure
		Non-colluding leaker set $L$;
		\State Calculate each receiver's unique objects $Unique_i$ from $Recv_1, Recv_2, ..., Recv_N$
		\For {$i$ from 1 to $N$}
		\If {$Data' \cap Unique_i \neq \varnothing$}
		\If {$Data' \cap Drop_i \neq \varnothing$}
		\State Add $S$ into $L$
		\EndIf
		\EndIf
		\EndFor
		
		\State Calculate objects known to the sender only: $SenderOnly$
		\If {$Data' \cap SenderOnly \neq \varnothing$}
		\State Add $S$ into $L$
		\EndIf
	\end{algorithmic}
\end{algorithm}

We prove the effectiveness of Algorithm~\ref{alg:single-leaker} here.
Specifically, we first prove that in a non-collusive leakage, each leaker's unique objects can be observed by a high probability (Lemma~\ref{lemma:single-receiver} and Lemma~\ref{lemma:sender}) with an adequate number of leaked data objects.
Based on this, we prove the algorithm can successfully identify the non-colluding leakers with a high success rate.

\begin{lemma}
	\label{lemma:single-receiver}
	If there exists a non-colluding guilty receiver $R_i$, given $Data'$ as the leaked fake data objects, $Data'$ will contain objects from $Unique_i$ with a probability $P$ as shown in Equation (6.1.3).
\end{lemma}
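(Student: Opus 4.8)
The plan is to reduce the claim to a clean sampling argument. Because $R_i$ is assumed to be \emph{non-colluding}, every fake object it leaks must come from its own received set, so $Data' \subseteq Recv_i$; and because $R_i$ cannot tell its unique objects apart from its shared ones, its choice of what to leak is statistically independent of membership in $Unique_i$. Granting those two facts, the event $Data' \cap Unique_i \neq \varnothing$ reduces to ``a sample of size $|Data'|$ drawn from $Recv_i$ meets the distinguished subset $Unique_i$,'' whose probability can be written down directly.

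First I would record the allocation structure. By the $1$-out-of-$2$ OT sharing of Section~\ref{sec:sharing}, for each message pair the other $N-1$ receivers each independently fail to receive a given object held by $R_i$ with probability $1/2$; hence each object in $Recv_i$ lies in $Unique_i$ independently with probability $(1/2)^{N-1}$, giving $|Unique_i| \approx |Recv_i| \cdot (1/2)^{N-1}$, consistent with the set computed in Equation~(6.1.1). I would then either condition on the realized value of $|Unique_i|$ or carry the ratio $q := |Unique_i| / |Recv_i|$ symbolically.

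Second comes the probabilistic heart of the argument. Since $R_i$ has no information about the other receivers' OT choices and the fake objects are indistinguishable by construction, no leaking strategy can correlate the identity of a leaked object with the event that the object belongs to $Unique_i$. Conditioned on $|Data'|$, the leaked objects are therefore exchangeable with a uniform random $|Data'|$-subset of $Recv_i$ with respect to the partition $(Unique_i,\, Recv_i \setminus Unique_i)$, and the probability that such a subset misses $Unique_i$ entirely is a hypergeometric tail, so
\[
P = 1 - \frac{\binom{|Recv_i| - |Unique_i|}{|Data'|}}{\binom{|Recv_i|}{|Data'|}} \approx 1 - (1-q)^{|Data'|},
\]
which I expect to match Equation~(6.1.3) after substituting $q = (1/2)^{N-1}$; the displayed closed form follows from the independent-sampling approximation, valid when $|Recv_i|$ is large.

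The step I expect to be the main obstacle is the exchangeability claim in the third paragraph: turning the informal statement ``$R_i$ cannot distinguish unique from shared objects'' into a guarantee that holds against an \emph{arbitrary} (possibly adversarial) leaking strategy, not merely against uniform random leaking. The clean way to do this is to condition on all of $R_i$'s view, namely its received set $Recv_i$, its OT choices, and any manipulations, and observe that $Unique_i$ is a deterministic function of the \emph{other} receivers' choices, which are independent of that view; symmetry over those choices then makes every $|Data'|$-subset of $Recv_i$ equally likely to be the one intersected against $Unique_i$. Once this invariance is established the remainder is the routine hypergeometric computation above, and the ``adequate number of leaked objects'' assumption is what drives $P$ close to $1$.
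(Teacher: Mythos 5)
Your proposal is correct and follows essentially the same route as the paper's proof: reduce the non-colluding leaker's selection to a uniform random sample from $Recv_i$ (justified by $R_i$'s inability to distinguish $Unique_i$ from $Recv_i \setminus Unique_i$) and compute the hypergeometric miss probability, which reproduces Equation~(6.1.3) exactly; your $|Unique_i| \approx |Recv_i|\cdot(1/2)^{N-1}$ also agrees with the paper's $|Unique_i| = (1/2)^N\,|Data|$ since each receiver gets exactly half of the fake objects, i.e., $|Recv_i| = |Data|/2$. Your final paragraph, conditioning on $R_i$'s view and noting that $Unique_i$ is a deterministic function of the other receivers' independent OT choices, is a welcome rigorization of the step the paper states only informally (``$R_i$ has no idea which objects are in $Unique_i$''), but it is a refinement of the same argument rather than a different approach.
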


\begin{proof}
	\label{proof:lemma:single-receiver}
	In the non-collusive leakage, $R_i$ has no idea that which objects are in $Unique_i$, so we can consider that $Data'$ contains objects randomly from $Unique_i$ and $Recv_i - Unique_i$.
	Therefore, we have the probability of ``at least one object of leaked dataset $Data'$ is in $Unique_i$" to be:
	\begin{align*}
	P~(Data' \cap Unique_i \neq \varnothing) =
	1 - \frac{ \binom{|Recv_i|-|Unique_i|}{|Data'|} }{ \binom{|Recv_i|}{|Data'|} }
	&&\mbox{(6.1.3)}
	\end{align*}
	According to the data allocation process with 1-out-of-2 OT, we have
	\[
		|Unique_i| = (\frac{1}{2})^{N} \times |Data|
	\]
\end{proof}

Importantly, $P$ is large enough even when a guilty receiver $R_i$ only leaks a small portion of data objects from $Recv_i$.
For example, with 3 receivers in total, when $|Data| = 20000$, the probability $P > 1-3\times10^{-13}$ even when $R_i$ leaks 1\% of its received data (i.e., 100 fake data objects).

\begin{lemma}
	\label{lemma:sender}
	If there exists a non-colluding guilty sender $S$, given $Data'$ as the leaked fake data objects, $Data'$ will contain objects from $SenderOnly$ with a probability P as shown in Equation (6.1.4).
\end{lemma}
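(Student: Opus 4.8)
The plan is to mirror the argument just used for Lemma~\ref{lemma:single-receiver}, but with the sender's view substituted for the receiver's. The pivotal observation is that, even though the sender holds the entire fake-object collection $Data$, it cannot tell which of those objects constitute $SenderOnly$: an object lies in $SenderOnly$ precisely when \emph{every} receiver dropped it during the OT, and by the obliviousness of 1-out-of-2 OT the sender never learns any receiver's choice bit. Consequently, from the sender's standpoint the set $SenderOnly$ is statistically indistinguishable from the rest of $Data$, so a non-colluding guilty sender that discloses $|Data'|$ objects is effectively sampling a uniformly random $|Data'|$-subset of $Data$ with respect to membership in $SenderOnly$.

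Granting that, I would compute $P$ through its complement. The event $Data' \cap SenderOnly = \varnothing$ is exactly the event that the leaked subset avoids $SenderOnly$ altogether, which is a hypergeometric quantity, giving
\begin{align*}
P~(Data' \cap SenderOnly \neq \varnothing) = 1 - \frac{\binom{|Data| - |SenderOnly|}{|Data'|}}{\binom{|Data|}{|Data'|}}
\end{align*}
matching the form expected for Equation (6.1.4). The remaining task is to size $SenderOnly$. Each fake object $f_{i,b}$ is received by a receiver $R_j$ exactly when $R_j$'s choice bit for pair $i$ selects $b$, an event of probability $\tfrac{1}{2}$ under uniform random choices; equivalently $R_j$ drops it with probability $\tfrac{1}{2}$. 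Since the receivers choose independently, an object is dropped by all $N$ receivers with probability $(\tfrac{1}{2})^{N}$, so $|SenderOnly| = (\tfrac{1}{2})^{N} \times |Data|$, exactly paralleling the sizing of $|Unique_i|$ in the previous lemma.

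The main obstacle is the indistinguishability step, not the counting. I must argue carefully that a guilty sender cannot strategically exclude $SenderOnly$ objects from what it leaks; this rests entirely on the OT security guarantee that the sender is oblivious to each receiver's selection. Without that guarantee, a sender aware of the drop pattern could simply refuse to leak any $SenderOnly$ object and defeat detection, which would break the lemma. Once the uniform-sampling reduction is justified, the hypergeometric complement and the $(\tfrac{1}{2})^{N}$ sizing are routine, and the concluding bound on $P$ follows in the same way that the receiver case yielded a probability close to $1$ for a modest leak size.
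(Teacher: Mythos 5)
Your proposal matches the paper's proof essentially exactly: the paper also argues by direct analogy with Lemma~\ref{lemma:single-receiver}, gives the same hypergeometric complement formula for Equation (6.1.4), and sizes $|SenderOnly| = (\frac{1}{2})^{N} \times |Data|$ from the 1-out-of-2 OT property. Your explicit justification of why the guilty sender cannot strategically avoid $SenderOnly$ --- that OT keeps every receiver's choice bit hidden from the sender --- is in fact more careful than the paper, which leaves that step implicit by simply stating the proof is the same as the receiver case.
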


\begin{proof}
	\label{proof:corollary:sender}
	The proof is the same as Lemma~\ref{lemma:single-receiver}.
	The probability of an nonempty intersection of $Data'$ and $SenderOnly$ is
	\begin{align*}
		P~(Data \cap SenderOnly \neq \varnothing) = \\
		1 - \frac{ \binom{|Data|-|SenderOnly|}{|Data'|} }{ \binom{|Data|}{|Data'|} }
		&&\mbox{(6.1.4)}
	\end{align*}
	According to the property of 1-out-of-2 OT, we have:
	\begin{align*}
		|SenderOnly| =
		(\frac{1}{2})^{N} \times |Data|
	\end{align*}
	Similarly, $P$ is large enough even when $S$ only leaks a small portion of data objects from $Data$. 
\end{proof}

\begin{theorem} \label{theorem:algo1}
Algorithm~\ref{alg:single-leaker} can identify all non-colluding leakers with success rate shown in Equation (6.1.3) and Equation (6.1.4).
\end{theorem}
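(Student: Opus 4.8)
The plan is to split the statement into a \emph{completeness} claim (every genuine non-colluding leaker is put into $L$) and a \emph{soundness} claim (no innocent party is put into $L$), and to notice that all of the probabilistic content lives on the completeness side, where it is already supplied by Lemma~\ref{lemma:single-receiver} and Lemma~\ref{lemma:sender}. The soundness side will be almost entirely deterministic set algebra that follows from how $Unique_i$, $Drop_i$, and $SenderOnly$ are defined, together with the three disjointness facts $Unique_i \cap Recv_j = \varnothing$ for $j \neq i$, $SenderOnly \cap Recv_j = \varnothing$ for every $j$, and $Drop_i \cap Recv_i = \varnothing$, each of which reads off directly from the OT-based allocation.

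For completeness I would treat the two leaker types in turn. If a receiver $R_i$ is the leaker, then $Data' \subseteq Recv_i$, so by the OT property $Data' \cap Drop_i = \varnothing$ deterministically (objects in $Drop_i$ were never received by $R_i$ and so cannot appear in what $R_i$ leaks); Lemma~\ref{lemma:single-receiver} then supplies $Data' \cap Unique_i \neq \varnothing$ with probability (6.1.3), so the guard ``first intersection nonempty, second empty'' fires and $R_i$ enters $L$ with exactly that probability. If instead the sender $S$ is the leaker, Lemma~\ref{lemma:sender} gives $Data' \cap SenderOnly \neq \varnothing$ with probability (6.1.4), and the final test of Algorithm~\ref{alg:single-leaker} adds $S$ with that probability. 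This alone yields the advertised detection rates.

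For soundness I would argue party by party. An innocent receiver $R_i$ can only be wrongly flagged when $Data' \cap Unique_i \neq \varnothing$. If the true leaker is another receiver $R_j$ with $j \neq i$, this is impossible, since $Data' \subseteq Recv_j$ and $Unique_i \cap Recv_j = \varnothing$. The only genuinely delicate situation is when the true leaker is the sender: because $S$ knows every fake object, $Data'$ can meet $Unique_i$. Here the $Drop_i$ test performs the disambiguation — since $S$ is oblivious to the allocation it leaks a subset of $Data$ uncorrelated with the $Unique_i/Drop_i$ partition, and because $|Drop_i| = |Data|/2$ dominates $|Unique_i| = (1/2)^N |Data|$, any leak that hits $Unique_i$ also hits $Drop_i$ except with negligible probability, so the inner branch correctly attributes to $S$ rather than to $R_i$. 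Symmetrically, the sender is never falsely flagged when a receiver $R_j$ leaks, as $Data' \subseteq Recv_j$ is disjoint from $SenderOnly$.

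The main obstacle is exactly this sender-versus-receiver disambiguation: unlike the other cases it is not a clean disjointness statement but a quantitative claim that a sender-originated leak touching $Unique_i$ almost surely also touches $Drop_i$. I would make it rigorous with the same hypergeometric reasoning used in the lemmas, bounding $P(Data' \cap Drop_i = \varnothing \mid Data' \cap Unique_i \neq \varnothing)$ by a ratio of binomial coefficients and showing it is dominated by the failure term of (6.1.4). A secondary subtlety to address is the ``several independent non-colluding receivers'' regime of the threat model, where one receiver's leak can populate another receiver's $Drop$ set and thereby blur the disambiguation; I would handle the clean detection of each guilty receiver through a union bound of instances of (6.1.3) over the $N$ receivers, and fold the (negligible) misattribution probability into the overall error, so that the theorem's stated success rates hold up to these lower-order terms.
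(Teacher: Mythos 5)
Your proposal is correct, and its completeness half is, in substance, the paper's \emph{entire} proof: the paper disposes of Theorem~\ref{theorem:algo1} in two sentences, mapping the receiver-checking lines of Algorithm~\ref{alg:single-leaker} to Lemma~\ref{lemma:single-receiver} and the sender-checking lines to Lemma~\ref{lemma:sender}, which yields the success rates (6.1.3) and (6.1.4) exactly as your completeness argument does (including the implicit deterministic fact that a guilty $R_i$'s leak satisfies $Data' \subseteq Recv_i$, hence $Data' \cap Drop_i = \varnothing$). Everything beyond that in your write-up is additional to the paper. Your soundness analysis --- the disjointness facts $Unique_i \cap Recv_j = \varnothing$ for $j \neq i$, $SenderOnly \cap Recv_j = \varnothing$, and above all the quantitative sender-versus-receiver disambiguation showing that a sender-originated leak touching $Unique_i$ almost surely also touches $Drop_i$ because $|Drop_i| = |Data|/2$ dominates $|Unique_i| = (1/2)^N|Data|$ --- is never argued in the paper, even though it is the only real justification for the $Drop_i$ test in the algorithm; your hypergeometric bounding plan for $P(Data' \cap Drop_i = \varnothing \mid Data' \cap Unique_i \neq \varnothing)$ is the right tool and matches the style of the lemmas. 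Likewise, your observation that pooled leaks from several independent non-colluding guilty receivers can populate one another's $Drop$ sets and defeat the guard identifies a genuine failure mode that the paper silently ignores: its lemmas are stated for a single guilty party, yet the theorem claims to catch \emph{all} non-colluding leakers; your union-bound patch is only sketched, but it is strictly more than the paper offers. One caution: you read the guard as ``unique-overlap nonempty \emph{and} drop-overlap empty,'' following the paper's prose summary; the printed pseudocode instead tests $Data' \cap Drop_i \neq \varnothing$ and then adds $S$ (not $R_i$) to $L$, which is evidently a typo since as written the loop would never flag any receiver --- so you should state explicitly that your proof targets the corrected algorithm.
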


\begin{proof}
Code line 4-11 is to identify non-colluding guilty receivers and line 12-15 is to identify a guilty sender.
According to Lemma~\ref{lemma:single-receiver} and Lemma~\ref{lemma:sender}, Algorithm~\ref{alg:single-leaker} can detect the sender with success rate shown in Equation (6.1.4) and detect the non-colluding receiver with success rate shown in Equation (6.1.3).
\end{proof}

\subsection{Identifying Colluding Leakers}

\vspace{2mm}
\noindent\textbf{Algorithm Summary}
\name identifies colluding leakers by the following steps.
\begin{enumerate} [leftmargin=*]
	\item The algorithm first represents all the data objects with a vector, in which each index represents a description (called a \emph{pattern}) of each receiver's state of receipt and the value of each element represents the number of objects that fall into this set.
	Similarly, the algorithm transforms all leaked objects into another vector.
	
	\item For each receiver, the algorithm then infers whether this receiver contributed to the collusive leakage by comparing the leaked-object vector with the all-object vector with a statistical test called binomial test.
	Intuitively, the object distribution in the leaked-object vector should not be significantly different from that in the all-object vector if this receiver did not contribute to the leakage.
\end{enumerate}

\vspace{2mm}
\noindent\textbf{Algorithm Rationales}
In a more complicated scenario where multiple receivers collude, the leaked data can still reflect leakers' knowledge because the leakers cannot know the data objects received by honest receivers.
Following this principle, \name is able to identify all leakers with the following procedure.
If a given party $R_i$ did not contribute to the collusion, the leaked objects in and out of $Recv_i$ should be close to statistically identical, due to the 1-out-of-2 OT used in the sharing process. 
If they are not, then $R_i$ is likely a colluding leaker.

To facilitate the explanation of the algorithm, we first introduce the Pattern Matrix $Pat$, where each row represents a type of allocation among receivers.
Given a number of receivers $N$, $Pat$ is a $2^N \times N$ matrix whose value is:
\[
Pat
=
	\begin{bmatrix}
		0 & 0 & ... & 0 & 0 \\
		1 & 0 & ... & 0 & 0 \\
		0 & 1 & ... & 0 & 0 \\
		&  &... \\
		0 & 1 & ... & 1 & 1 \\
		1 & 1 & ... & 1 & 1
	\end{bmatrix}
\]
The value of each row is a reverse of row index's binary value.
In each row $j$, we define each element $Pat[j][i] = 1$ to represent that $R_i$ has received the set of data objects and $Pat[j][i] = 0$ if this object has been dropped by $R_i$ in OT.
For example, we have:
\[
Pat[0]~\rightarrow~Drop_1 \cap Drop_2 \cap ... \cap Drop_N 
\]
which represents the pattern of ``dropped by all receivers".
\[
Pat[2^N]~\rightarrow~Recv_1 \cap Recv_2 \cap ... \cap Recv_N
\]
which represents the pattern of ``received by all receivers".
In the Pattern Matrix, we call two row indexes (i.e., patterns) $k$ and $k'$ are \emph{paired} on $R_i$ if they are only different at $i$'s slot.
For example, $Pat[0]$ and $Pat[1]$ are paired rows on $R_0$.

Since each leaked fake data object can be categorized in to a pattern, which is a row index in the Pattern Matrix, all leaked fake objects can be represented by a \emph{Dataset Vector}, denoted by $V$.
\[
V = [a_0,~a_1,~...,~a_i,~...,~a_{2^N}]
\]
where each element $V[i]$ represents the number of data objects that fall into the $i$th pattern in the Pattern Matrix.
For example, $V[0]$ is the number of objects that are never received by any receiver and $V[1]$ is the number of objects that are only received by $R_0$.
Similarly, in a Dataset Vector, we call two elements $V[k]$ and $V[k']$ are paired if they are indexed by two paired patterns $k$ and $k'$.

Importantly, for two paired indexes (i.e., patterns) $k$ and $k'$ on receiver $R_i$, if $R_i$ makes no contribution to the leakage, a leaked data object of pattern $k$ or $k'$ should have:
\[
P(d~is~of~pattern~k) = 0.5
\]
This is because leakers have no knowledge to distinguish whether the object $d$ has been received by $R_i$ or not.

In order to measure whether the difference is significant enough, we employ a statistic test, binomial test~\cite{rosen2012discrete}, to obtain a p-value~\cite{wiki:p-value}.
In \name, a binomial test can be represented as follow.
\[
p\_value = Binomial\_Test(V[k], V[k'], 0.5)
\]
To be specific, p-value indicates the probability of having a distribution equal to or more extreme than ($V[k]$, $V[k']$), under the hypothesis that $R_i$ is honest, i.e., $P(d \in V[k]) = 0.5$.
For example, when ($V[k], V[k']$) = (14, 6), the p-value is 0.115, meaning 11.5\% of the time one would expect to have a distribution equal to (14, 6) or more extreme (e.g., (15, 5)) if $R_i$ is honest.
When p-value is less than a pre-defined value $\alpha$ (e.g., $\alpha = 0.05$), it is considered that the hypothesis does not hold and thus $R_i$ is one of the leakers.

A noteworthy fact is that \name cannot identify a corrupted receiver when this receiver makes no contribution to the leakage, that is, other corrupted parties can still have sufficient knowledge to make such a leaked dataset even without this receiver's participation.

\vspace{2mm}
\noindent\textbf{Pseudo Code and Proof}
Algorithm~\ref{alg:colluded-leaker}  shows the pseudo code of the \name's identification algorithm for colluding leakers.

\begin{algorithm}[h]
	\caption{Leakers identifying algorithm.}
	\label{alg:colluded-leaker}
	\begin{algorithmic}[1]
		\Require \\
		Leaked fake data objects\\
		The allocation of all the fake data objects \\
		A significance level: $\alpha$, $\alpha$ = 0.05 by default 
		\Ensure
		Leakers set $L$ with an accuracy value $A$
		
		\State Transform leaked data into a Dataset Vector $V$
		
		\State Initialize a vector $PValues[N]$ = \{1,1,...,1\}
		\For {$i$ from 1 to $N$}
		\For {$k$ from 1 to $2^N$}
		\State Find $k'$ which is paired with $k$
		\State $p\_value_i$ = Binomial\_Test($V[k]$, $V[k']$, $P = 0.5$)
		\If {$p\_value_i$ $<$ $\alpha$}
		\State Add $R_i$ into $L$
		\State $PValues[i]$ = min($PValues[i]$, $p\_value_i$)
		\EndIf
		\EndFor
		\EndFor
		\State Calculate accuracy as $A = \prod_{i \in L}(1 - PValues[i])$
		
	\end{algorithmic}
\end{algorithm}

Now we prove the effectiveness of Algorithm~\ref{alg:colluded-leaker} with a brief proof.
Specifically, we first prove that when a receiver contributed to the leakage, its contribution can be reflected by the object distribution of two paired elements in Dataset Vector (Lemma~\ref{lemma:distribution}).
After that, we can prove the correctness of Algorithm~\ref{alg:colluded-leaker} (Theorem~\ref{theorem: algo-2}).

Let $V$ be the Data Vector for the leaked dataset.
\begin{lemma}
	\label{lemma:distribution}
	If there exist any paired patterns $k$ and $k'$ on $R_i$ where $V[k]$ has a significant difference from $V[k']$, we draw the conclusion that receiver $R_i$ contributed to the leakage with an error rate of the p-value obtained in the binomial test.
\end{lemma}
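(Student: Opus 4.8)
The plan is to treat the lemma as a statement about the Type I error of a binomial hypothesis test, and to reduce its correctness to the indistinguishability guarantee of the 1-out-of-2 OT. The null hypothesis is precisely that $R_i$ did \emph{not} contribute to the leakage; I would show that under this hypothesis the pair $(V[k], V[k'])$ is governed by a $\mathrm{Binomial}(V[k]+V[k'], 0.5)$ law, so that the p-value returned by the binomial test is, by construction, the probability of a false accusation.

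First I would fix the two paired patterns $k$ and $k'$ on $R_i$. By the definition of pairing, the objects counted in $V[k]$ and $V[k']$ have identical receipt/drop status for every receiver other than $R_i$; they differ only in whether $R_i$ received (bit $1$, pattern $k$) or dropped (bit $0$, pattern $k'$) the object in the OT. The key step is to argue that, from the joint view of every party except $R_i$, an object in the combined class is equally likely to be of pattern $k$ as of pattern $k'$. This follows from the security of the 1-out-of-2 OT: each object in a paired slot is sent as two indistinguishable tuples $m_0, m_1$, and $R_i$'s choice bit is a fair, independent coin that remains hidden from the sender and from all other receivers. Hence, conditioned on all information available to the colluding parties, the label ``pattern $k$'' versus ``pattern $k'$'' is an unbiased $\mathrm{Bernoulli}(0.5)$, independent across the objects in the class, which is exactly the stated fact $P(d~\text{is of pattern}~k) = 0.5$.

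Next I would invoke the null hypothesis that $R_i$ does not contribute to the leakage. In that case the leakers' decision of which objects to disclose is a function only of the information available to the non-$R_i$ parties, hence is independent of the hidden $k/k'$ labels established above. Therefore each of the $n := V[k] + V[k']$ leaked objects in the combined class independently lands in pattern $k$ with probability $0.5$, so $V[k] \sim \mathrm{Binomial}(n, 0.5)$. The binomial test then computes the p-value as the probability, under this null law, of a split at least as extreme as the observed $(V[k], V[k'])$. Declaring $R_i$ a contributor exactly when the observed split is this extreme is therefore a rejection region whose Type I error---the chance of accusing a non-contributing $R_i$---equals the p-value by definition, which is the claimed error rate.

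Finally, I would record the converse direction that motivates the rejection rule: a genuinely significant deviation of $V[k]$ from $V[k']$ cannot arise from $R_i$'s hidden coin alone, so it can only be explained by the leakers exploiting knowledge of $R_i$'s receipts, i.e., by $R_i$ contributing. I expect the main obstacle to be making the independence argument of the third paragraph fully rigorous: one must ensure that the colluding parties' selection strategy, which may depend adaptively on everything they jointly observe, still carries zero information about $R_i$'s per-object choice bits. This is where the OT security property must be invoked carefully, and where a sloppy argument could conceal a correlation that would invalidate the $\mathrm{Binomial}(n,0.5)$ null and hence the reported error rate.
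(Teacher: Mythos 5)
Your proposal is correct and follows essentially the same route as the paper: under the null hypothesis that $R_i$ did not contribute, the OT hides $R_i$'s choice bits from every other party, so leaked objects split between the paired patterns $k$ and $k'$ as fair coin flips, and the binomial test's p-value is exactly the probability of falsely implicating an honest $R_i$. If anything, your write-up is tighter than the paper's own proof, which casts the argument as an informal proof by contradiction ($V[k]$ and $V[k']$ are ``expected to be the same'') and simply asserts that the error rate is the p-value, whereas you derive the $\mathrm{Binomial}(V[k]+V[k'],0.5)$ null law explicitly and correctly flag the one genuinely delicate step --- that the leakers' possibly adaptive selection strategy must be independent of the hidden choice bits, which is precisely what the OT security guarantee must supply.
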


\begin{proof}
	We give a proof by contradiction.
	Assuming $R_i$ does not contribute to the leakage, since the corrupted receivers cannot know $Recv_i$, their leaked data randomly contains both objects received and not received by $R_i$.
	Therefore, $V[k]$ and $V[k']$ are expected to be the same because corrupted receivers cannot tell objects of pattern $k$ and $k'$ apart.
	This contradicts the statement that $V[k]$ has a significant difference from $V[k']$, so receiver $R_i$ is corrupted and has contributed to the leakage.
	
	The error rate is the probability that a significant difference exists when $R_i$ is honest, which is represented by the p-value obtained from the binomial test.
\end{proof}

\begin{theorem}
	\label{theorem: algo-2}
	Algorithm~\ref{alg:colluded-leaker} can identify all the colluding receivers with the accuracy obtained in code line 18.
\end{theorem}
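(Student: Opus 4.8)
The plan is to split Theorem~\ref{theorem: algo-2} into two parts---a correctness claim, that every receiver Algorithm~\ref{alg:colluded-leaker} places in $L$ genuinely contributed to the leakage (and, conversely, that every contributing receiver is placed in $L$), and an accuracy claim, that the probability all flags in $L$ are simultaneously correct equals the product computed in code line 18. First I would recast the per-receiver decision as a hypothesis test: for each $R_i$ let the null hypothesis $H_0^{(i)}$ assert that $R_i$ made no contribution to the leakage. By the oblivious property of the 1-out-of-2 OT, under $H_0^{(i)}$ the colluding parties cannot tell whether $R_i$ received a given object, so for any patterns $k,k'$ paired on $R_i$ each leaked object of that combined type is equally likely to land in $V[k]$ or $V[k']$. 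This is exactly the $P=0.5$ assumption feeding the binomial test, and it is the bridge from the OT guarantee to the statistical machinery.

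Next I would establish soundness directly from Lemma~\ref{lemma:distribution}: whenever the inner loop of Algorithm~\ref{alg:colluded-leaker} finds a paired $(k,k')$ with $p\_value_i < \alpha$, the lemma lets us reject $H_0^{(i)}$ and conclude that $R_i$ contributed, with false-positive probability bounded by that very p-value. For completeness I would argue the contrapositive: if $R_i$ did contribute, the leakers must have exploited knowledge of $Recv_i$ to skew their selection toward (or away from) objects inside $Recv_i$, so at least one paired slot exhibits an imbalance; given the adequate number of leaked objects assumed in Section~\ref{sec:model}, this imbalance drives the binomial p-value below $\alpha$ and forces $R_i$ into $L$.

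Then I would compute the accuracy. For each $R_i \in L$ the quantity $PValues[i]$ is the smallest significant p-value recorded for that receiver, so $1-PValues[i]$ is the confidence that the rejection of $H_0^{(i)}$ is correct---that $R_i$ is truly a leaker rather than an honest receiver that happened to produce an extreme split. Treating the decisions for distinct receivers as independent events, the probability that every flag in $L$ is simultaneously correct is the product $\prod_{i\in L}(1-PValues[i])$, which is precisely the value $A$ assembled in code line 18.

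The hard part will be rigorously justifying this product formula, since all per-receiver tests reuse the same leaked dataset and are therefore not manifestly independent: each leaked object contributes to exactly one pattern count, and that count participates in the pairing for every receiver at once. To make the factorization legitimate I would need to show that the false-positive events for different receivers depend on counts aggregated along disjoint coordinate axes of the pattern space and hence decouple, at least asymptotically in the number of leaked objects. A secondary obstacle is the completeness direction: the phrase ``identify all colluding receivers'' must be read as ``all \emph{contributing} colluding receivers,'' consistent with the earlier caveat that a receiver whose knowledge is redundant leaves no detectable signature, and turning the informal skew argument into a quantitative guarantee requires making the ``sufficiently many leaked objects'' threshold explicit.
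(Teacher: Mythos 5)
Your proposal follows essentially the same route as the paper's proof: per-receiver soundness via Lemma~\ref{lemma:distribution} (rejecting, at the binomial p-value, the hypothesis that $R_i$ did not contribute, which rests on the OT-induced $P=0.5$ symmetry across paired patterns) and overall accuracy taken as the product $\prod_{i \in L}(1 - PValues[i])$ from code line 18. The two ``hard parts'' you flag---justifying independence of the per-receiver tests run on the same leaked dataset, and the completeness direction under the reading ``all \emph{contributing} colluding receivers''---are precisely the points the paper's own proof asserts without further argument, so your plan is, if anything, more demanding than the printed one.
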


\begin{proof}
	In Algorithm~\ref{alg:colluded-leaker}, line 6 to line 15 check for each receiver $R_i$, whether there exists paired $V[k]$ and $V[k']'$ such that $V[k]$ is significantly different from $V[k']'$.
	According to Lemma~\ref{lemma:distribution}, each corrupted receiver who contributed to the leakage can be identified with an error rate to be the p-value.
	Therefore, for each identified corrupted leaker $j$, the probability of giving a correct result is $1 - PValues[j]$, where $PValues[j]$ is the smallest p-value among all the p-values for each $V[k]$, $V[k']'$ pair in which a significant difference is observed.
	The accuracy for the whole algorithm is the probability that the algorithm makes a right identification on each captured party, so the accuracy should be
	\begin{align*}
	\prod_{j \in L}(1 - PValues[j])
	\end{align*}
	where L is the set of all leakers captured by Algorithm~\ref{alg:colluded-leaker}. 
\end{proof}
\section{Evaluation}
\label{sec:evaluation}

In this section, we evaluate \name's performance in the data sharing phase and the effectiveness of the knowledge-based identification algorithm when leakage happens.
To be more specific:
\begin{itemize} [leftmargin=*]
    \item We measure the time overhead in \name's data sharing sharing process in terms of the number of data objects and receivers.
    The goal is to show how the cryptographic operations of OT will affect the performance.
    From the evaluation, we show \name is able to finish heavy-load PII database sharing tasks in a short time. 
    
	\item We evaluate \name's effectiveness in identifying leakers in different scenarios: a non-colluding guilty receiver, a guilty sender, and a group of colluding parties.
	Our evaluation shows that with total 20,000 objects and 6 receivers, \name is able to identify all the leakers with accuracy more than 99.99\% in any scenarios if more than 5\% of the objects are leaked and used in the identification algorithm.
\end{itemize}

\subsection{Sharing Performance}

We first evaluate \name's time overhead in OT-based data sharing process.
Since \name requires OT operations in the transfer of each data object, directly applying raw OT can be highly time-consuming.
However, the time complexity caused by heavy-load OT operations can be greatly reduced by utilizing OT extensions~\cite{ishai2003extending, kiraz2007efficient, asharov2013more}.
We implement a prototype of \name with the OT extension proposed by Ishai \etal~\cite{ishai2003extending} and all the evaluation results were obtained from this prototype implementation. 

\vspace{3mm}
\noindent\textbf{Evaluation Settings}
We evaluate \name's performance using a PII database extended from a public dataset called Heart Disease Data~\cite{uci-heart}.
The original data contains patients' SSNs and full names, but these PII attributes have been stripped in the publicly available version for privacy concerns.
For the purpose of evaluation, we extend the size of the dataset and add back these two PII attributes with randomly generated values.
The size of each data object used in the evaluation is 88 Bytes.
The evaluation platform is a 4-core 2.2 GHz Intel Core i7 and the communication channel is a UDP connection between local host ports (so the time consumption is mainly from cryptographic operations).

\begin{figure}[h]
	\centering
	\includegraphics[width=0.35\textwidth]{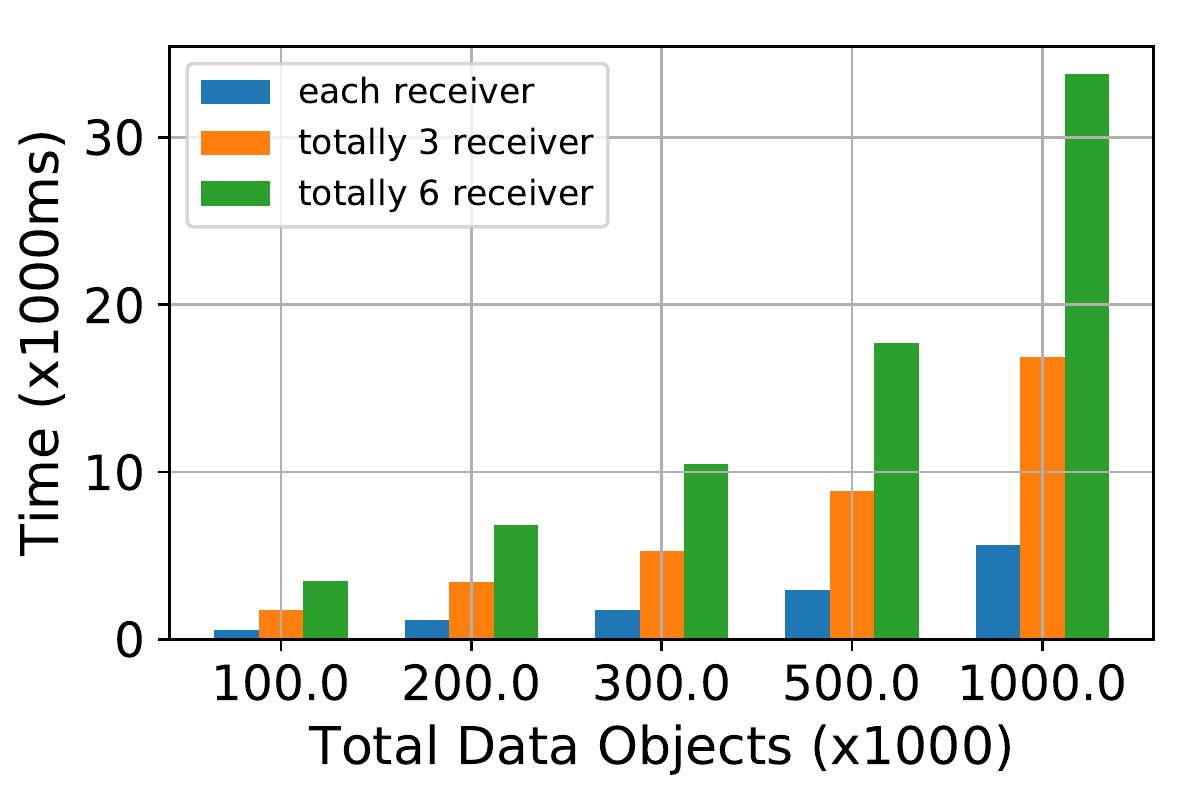}
	\caption{Time Consumption of Sharing in \name}
	\label{fig:performance}
\end{figure}

\begin{figure*} [ht]
	\centering
	\begin{subfigure}{.32\textwidth}
		\centering
		\includegraphics[width=\linewidth]{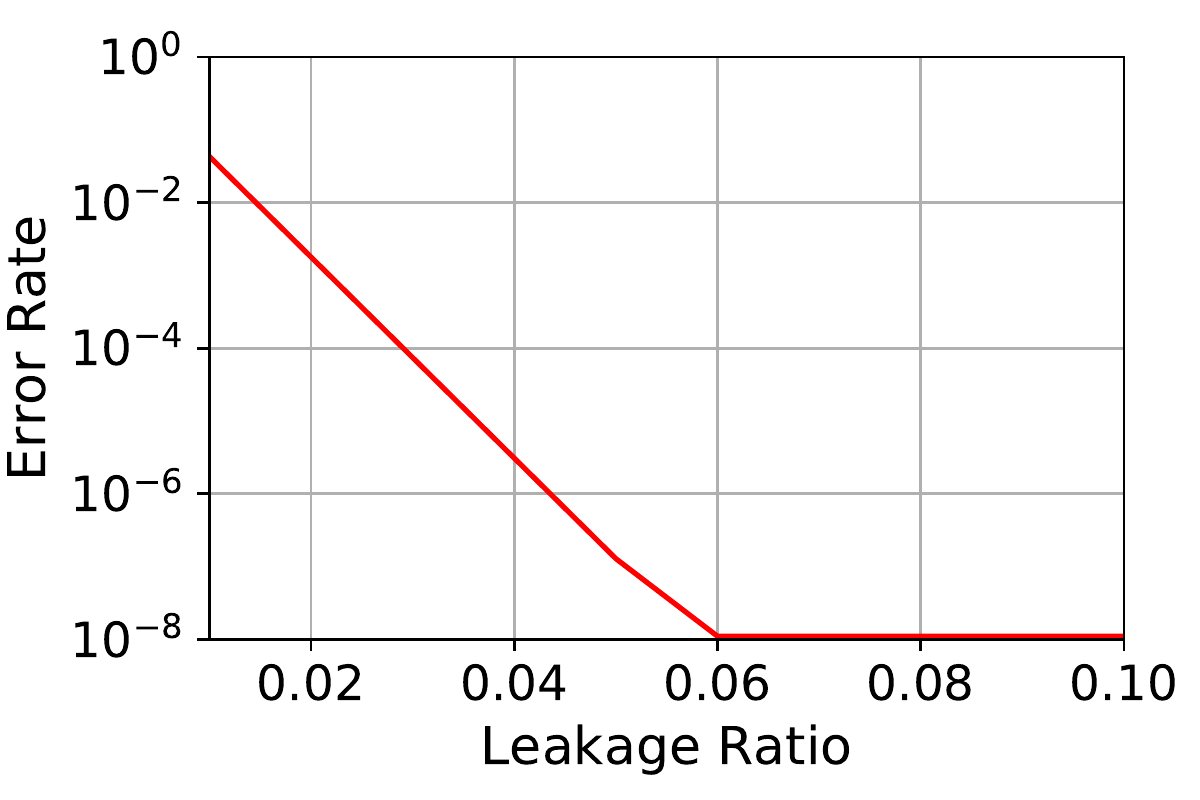}
		\caption{Each Non-colluding Guilty Receiver}
		\label{fig:single-receiver}
	\end{subfigure} %
	~ 
	\begin{subfigure}{.32\textwidth}
		\centering
		\includegraphics[width=\linewidth]{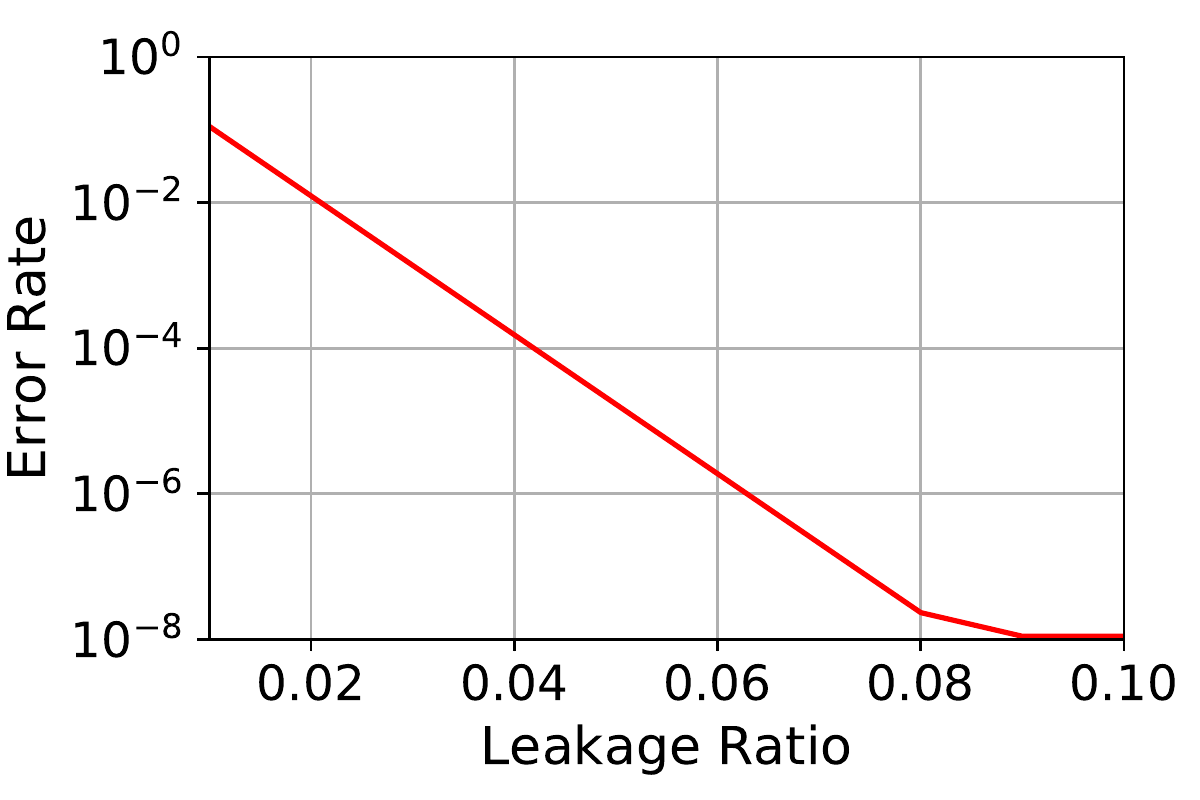}
		\caption{Guilty Sender}
		\label{fig:sender}
	\end{subfigure}
	~ 
	\begin{subfigure}{.32\textwidth}
		\centering
		\includegraphics[width=\linewidth]{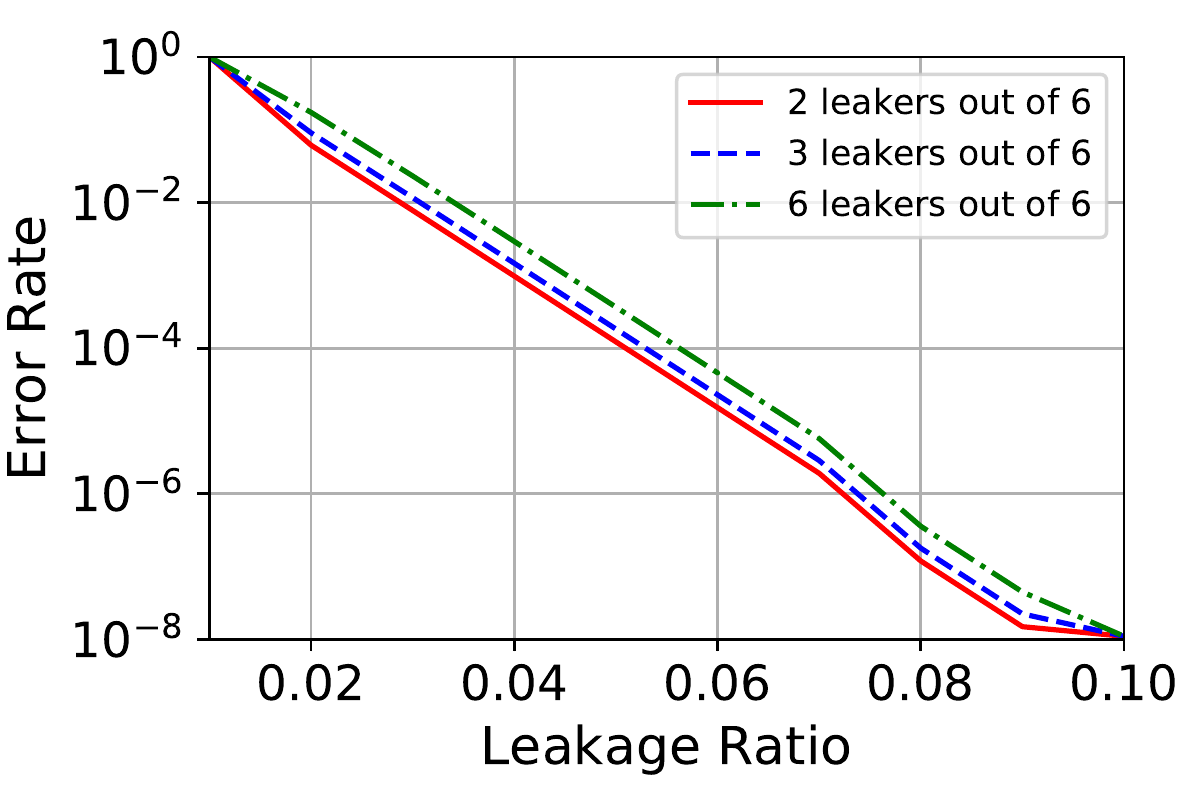}
		\caption{Colluding Receivers}
		\label{fig:collusion}
	\end{subfigure}
	\caption{Leaked Data Amount v.s. Detection Error Rate}
	\label{fig:errorrate-amount}
\end{figure*}

\vspace{3mm}
\noindent\textbf{Evaluation Results}
Our evaluation on \name prototype implementation shows its efficiency in a real-world deployment.
Though the time consumption is linear to the number of receivers and the total objects to share, it only takes about 580 milliseconds to share 100,000 data objects with each receiver.
As shown in Figure~\ref{fig:performance}, even with 1,000,000 data entries and 6 receivers, \name is able to finish data sharing processes in about 30 seconds.
We omitted the time consumption on the preparation and verification in this paper because they take much less time than the sharing process and do not need to be performed in real-time.

\subsection{Accurate Leaker Identification}
\label{sec:evaluation:identification}

We then evaluate the identification accuracy of \name's knowledge-based identification algorithm through data leakages simulations.
To be specific, we allow the leaker(s) to be 
\begin{enumerate*} [label=(\roman*)]
	\item a guilty receiver,
	\item multiple guilty receivers without collusion,
	\item a guilty sender,
	and
	\item multiple guilty receivers with collusion,
\end{enumerate*}
respectively, and simulate the leaked datasets by randomly picking up a certain percentage of objects from the leakers.
In fact, we find the identification accuracy of a non-colluding guilty receivers is independent from how many non-colluding guilty receivers there are (i.e., case (i) and case (ii)), so we merge them into a single case.

In the collusive leakage, we randomly select three combinations; for example, in a leakage simulation of 3 colluding leakers out of 6 receivers, we create the leaked dataset to be a random subset of following sets respectively.
\[
	Recv_1 \cap Recv_2 \cap Recv_3,~Recv_1 \cap Recv_2^{\complement} \cap Recv_3 
\]\[
	(Recv_1 \cap Recv_2^{\complement} \cap Recv_3) \cup (Recv_1^{\complement} \cap Recv_2 \cap Recv_3)
\]
In the evaluation, we change the number of leaked objects by setting a different percentage when generating the leaked dataset.
At the same time, we fix the total numbers of data objects and receivers in the sharing.

\vspace{3mm}
\noindent\textbf{Evaluation Settings}
We let the value of $|Data|$ be 20,000, and the total number of receivers $N$ be 6.
In the collusive leakage scenario, we evaluate the detection when there are 2, 3, and 6 colluding receivers out of total 6 receivers, and in each case, we randomly try three different types of dataset combinations, which consist of one or more intersection, complement, and union operations.

\vspace{3mm}
\noindent\textbf{Plot Notation}
The $x$-axis indicates the fraction of the leaked data objects over a leaker's dataset or a processed dataset made by colluding receivers.
Specifically, we formally describe the meaning of $x$ using $x = 0.01$ as an example.
\begin{itemize} [leftmargin=*]
	\item In a non-collusive leakage, it means 1\% of guilty receiver's objects $Recv_i$ are leaked and used in \name to identify the leakers.
	\item In the case of a guilty sender, it means 1\% of the common objects are leaked and used in \name to identify the leakers.
	\item As for a collusive leakage, $x = 0.01$ means 1\% objects from the combined dataset (e.g., after a set of collusive operations) are leaked and used in \name.
\end{itemize}
Notice that with the same fraction, the size of a leaked dataset in collusive leakage can be much smaller than that in a non-collusive leakage, e.g., $|Recv_1| \times 0.01 < |Recv_1 \cap Recv_2| \times 0.01$.
The $y$-axis indicates the error rate of the detection results. 
Error rate 1 means the number of leaked objects is not sufficient for the identification algorithm to identify the leaker(s) while error rate 0 indicates that the algorithm can identify all the leakers undoubtedly.
In the plots, we let the maximum accuracy precision to 99.999999\%, in other words, the minimum error rate is $1 \times 10^{-8}$.

\vspace{3mm}
\noindent\textbf{Evaluation Results}
As shown in Figure~\ref{fig:errorrate-amount}, the accuracy becomes 1 when 6\%, 9\%, and 10\% of the received objects are leaked in the case of a non-colluding guilty receiver, a guilty sender, and a number of guilty receivers, respectively.
The more data is leaked and used in \name, the higher the identification accuracy will be.
In Figure~\ref{fig:single-receiver} and Figure~\ref{fig:sender}, we can observe similar identification accuracy in both cases.
This is because $|SenderOnly| = |Unique_i|$ for any $i < N$ and the difference is because of the definition of leakage ratio: a receiver's ratio denominator is $|Recv_i|$ while the sender's ratio denominator is the quantity of all data objects.
Figure~\ref{fig:collusion}, illustrates that when the number of colluding receivers increases, there is a decrease in the detection accuracy.
This is because when more receivers contribute to the leakage, the combined dataset usually becomes smaller in size, e.g., $Recv_1 \cap Recv_2$ has a bigger size than $Recv_1 \cap Recv_2 \cap Recv_3$, and thus a larger number of leaked objects are needed to identify leakers. 
Moreover, we find the accuracy will not change much when different types of combination operations are made in a collusive leakage.

\section{Discussion}
\label{sec:discussion}

\subsection{Basic Trade-off in \name}

The main approach of \name is to allocate data objects among different receivers, which inevitably leads to a fundamental trade-off between \name's performance and the data loss in the allocation.
For example, if each receiver receives the same whole dataset, there is no differentiation that can be used to identify leakers; on the other hand, if each receiver receives a totally different subset, then identification becomes much easier.
Therefore, the larger the data loss is, the more differentiation each receiver's data objects will carry and consequently the more accurate the identification will be.
However, data loss will greatly limit the practical use of a sharing system.
For example, in our example use case, a school is supposed to share all students' information to receivers because the service should serve all the students.

In order to circumvent the limitation, \name introduces fake objects.
This changes the basic trade-off to be \emph{fake data object size versus the accuracy of leaker identification}.
This is because the data loss in normal allocation based approaches is replaced by fake data loss in \name.
On the other hand, the high accuracy is at the cost of having fake objects, which at least double the size of each received dataset.
By enlarging the number of fake data objects used in \name, identification accuracy can be further improved.

\subsection{Impact of Fake Data Objects}
\label{sec:discussion:fake}
PII data is usually shared for two reasons. 
Sometimes it is shared to enable the data receiver to provide certain services for each individual user in the dataset.
For example, online educational services provide personalized services to students based on student information shared by the school.
Other times it is shared to let data receivers analyze the data and dig out useful statistical conclusions.

\name does not bring up negative effects for the former because \name ensures that each receiver will receive all data entries in the original dataset, so all users will be served just as when \name is not applied. 
We acknowledge that a data receiver may have to prepare services for non-existing users (corresponding to the fake data entries that are indistinguishable to the data receiver), but this will not influence the service quality for real users.
Regarding the second use case, we argue that the dataset can be anonymized before data analysis.
In addition, previous works~\cite{modelfordataleakage, data-leakage-detection} also show that properly adding fake data entries basically will not decrease the quality of data analysis.

\begin{figure*} [ht]
	\centering
	\begin{subfigure}{.32\textwidth}
		\centering
		\includegraphics[width=\linewidth]{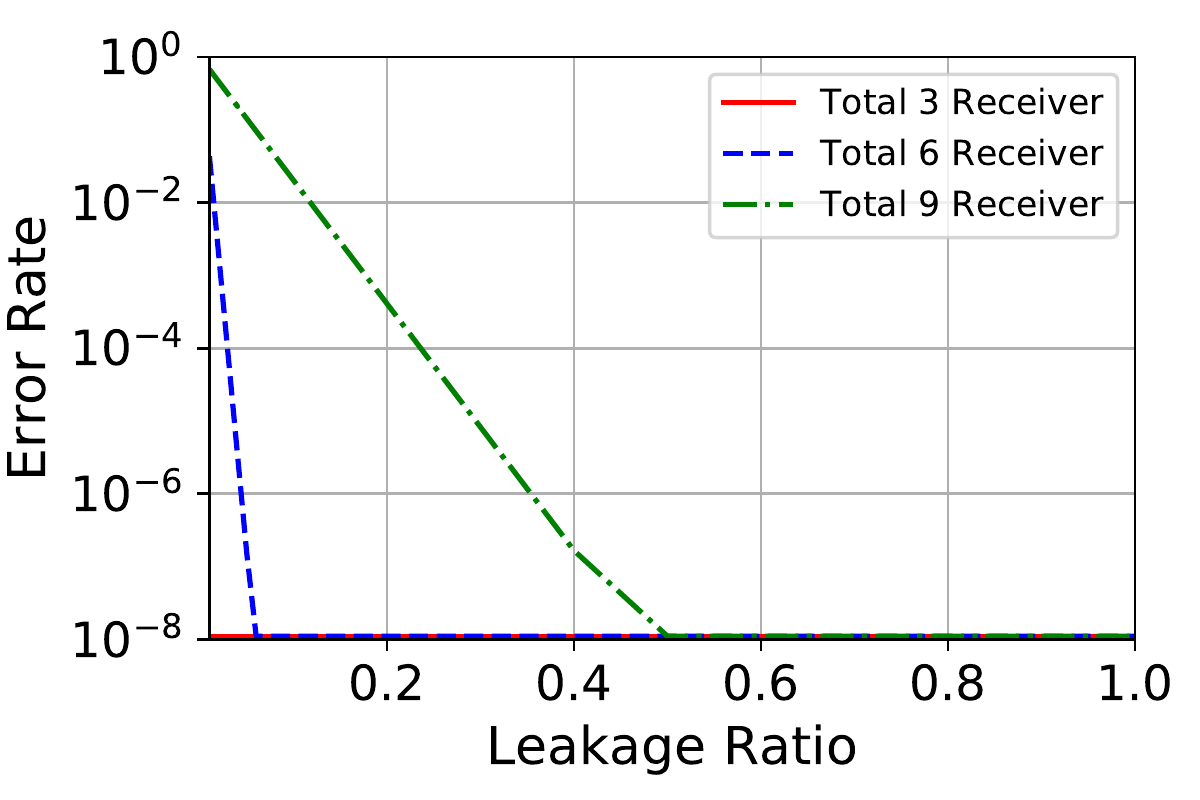}
		\caption{Single Guilty Receiver}
		\label{fig:receiver-num-sender}
	\end{subfigure} %
	~ 
	\begin{subfigure}{.32\textwidth}
		\centering
		\includegraphics[width=\linewidth]{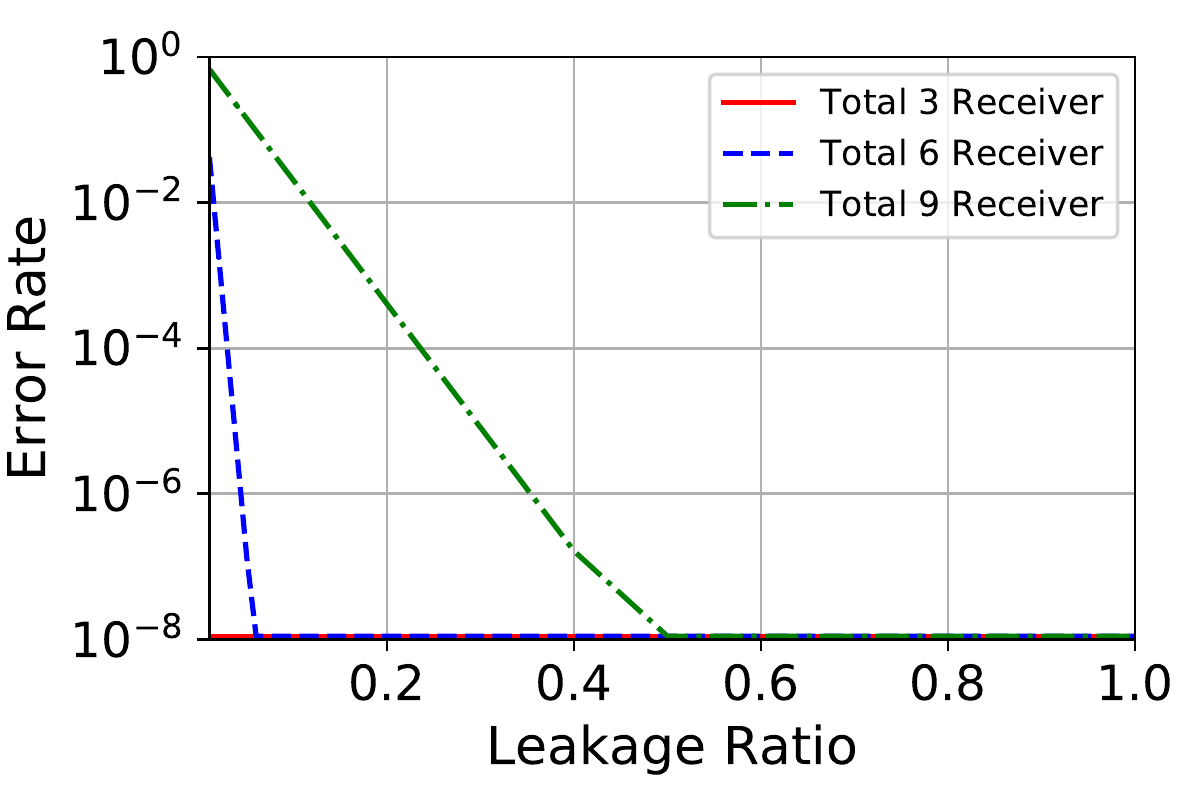}
		\caption{Guilty Sender}
		\label{fig:receiver-num-single}
	\end{subfigure}
	~ 
	\begin{subfigure}{.32\textwidth}
		\centering
		\includegraphics[width=\linewidth]{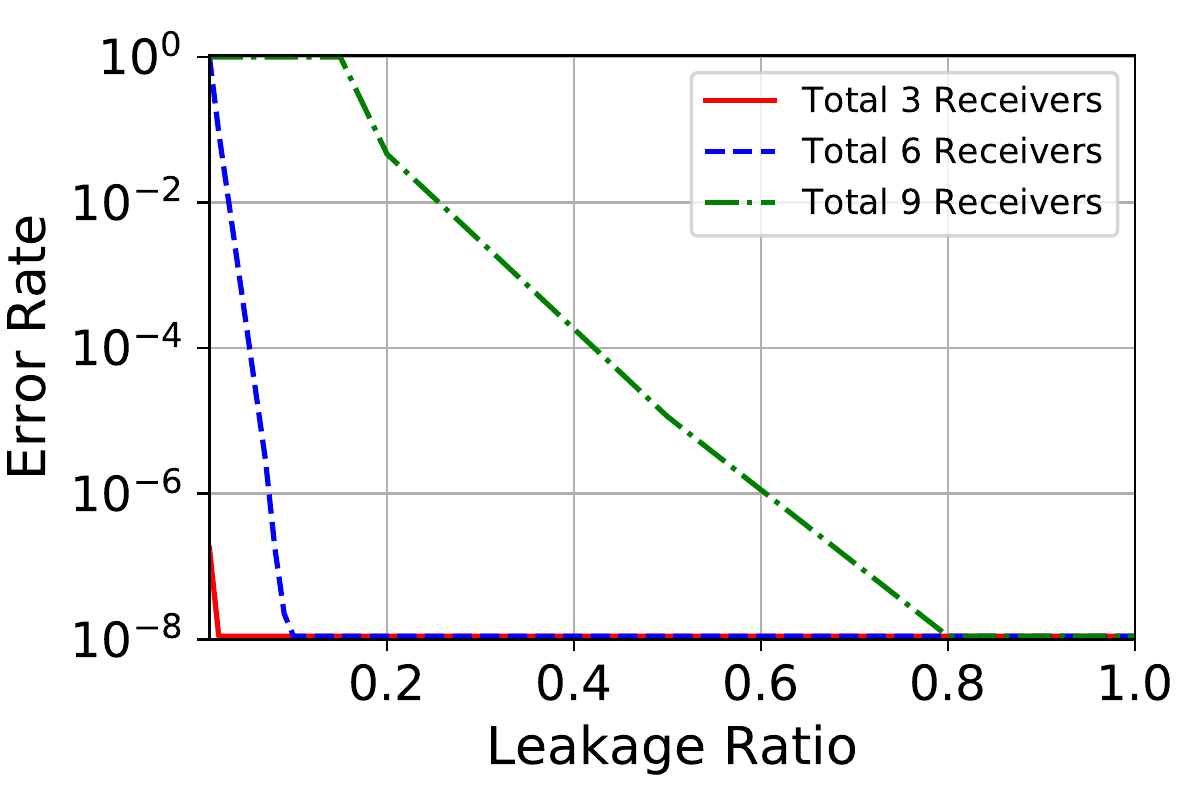}
		\caption{Colluding Receivers}
		\label{fig:receiver-num-collusion}
	\end{subfigure}
	\caption{The Impact of the Total Receiver Number}
	\label{fig:receiver-num}
\end{figure*}

\subsection{\name with a Large Number of Receivers}
\label{sec:discussion:fake-size}

\name allocates fake data objects among receivers for leaker identification.
To ensure a high identification accuracy, adequate fake objects should be allocated to each receiver.
Inadequate objects will degrade \name's performance in leaker identification.
Using the simulation setting as in Section \ref{sec:evaluation}, we set the total fake object number to 20,000.
As shown in Figure~\ref{fig:receiver-num}, when there are totally 3, 6, and 9 receivers, the error rate  increases as the total receiver number grows.
With 3 receivers, \name can identify leakers accurately when about 5\% of data objects are leaked (as shown in Section \ref{sec:evaluation:identification}), so 20,000 fake objects are sufficient.
When sharing with 6 receivers, \name's accuracy is close to 1 when 10\% of data objects are leaked.
This means $20,000$ fake objects can still lead to a relatively good accuracy but is worse relative to the 3-receiver scenario.
However, this quantity of fake objects are inadequate when there are 9 receivers.
As shown, to reach an error ratio less than $10^{-8}$, \name requires 50\%, 50\%, and 80\% of fake data objects leaked from guilty receivers in three leakage cases.

Therefore, in real-world use of \name, the sender should have more fake data objects when sharing data with more receivers.
To achieve an accuracy $>$ 99.99\%, it requires about $200,000$ fake objects in a 9 receiver scenario.

\subsection{Collusion Between a Sender and Receivers}
\label{sec:discussion:sender}

We have not elaborated the case when a sender colludes receivers, but in fact, \name can cope with it naturally.
Specifically, there are two possible situations.
\begin{enumerate} [leftmargin=*]
	\item A sender contributes to the leakage.
	In this case, the only contribution the sender can make are those objects not received by its associates (i.e., colluding receivers).
	By Lemma~\ref{lemma:sender}, there is a great chance that leaked objects overlap $SenderOnly$, so the sender will be caught by Algorithm~\ref{alg:single-leaker}.
	At the same time, all corrupted receivers who contributed to the leakage will be captured by Algorithm~\ref{alg:colluded-leaker}.

	\item The sender does not contribute to the leakage, under which, there is no need to capture the sender.
	Leakers who contributed to the leakage can be captured by Algorithm~\ref{alg:colluded-leaker}.
\end{enumerate}

Importantly,  in either case, the sender has no clue of data received by honest receivers due to the use of OT, so a collusion between the sender and corrupted receivers cannot frame an honest receiver.
\section{Related Work}
\label{sec:related}

The data leakage problem has been explored for years~\cite{shabtai2012survey, cox2002digital}.
Past works in this domain cover leakage prevention, leakage detection, and leaking source identification.
In terms of the media of information, existing works have covered structured data (e.g., rational database, spreadsheets), textual data (e.g., contracts, reports), and multimedia (e.g., images, video).

Our work focuses on the leaker identification problem in PII dataset, which is typically under the category of structured data.
Existing leaker identification solutions to the same issue can be divided into two big categories:
\begin{enumerate*} [label=(\roman*)]
	\item data watermarking,
	and 
	\item data object allocation across receivers.
\end{enumerate*}

\subsection{Watermarking}
In the first category~\cite{halder2010watermarking}, the data sender alters the noise-tolerant attributes or embeds hidden information into the data before sending it to each receiver so that each distributed copy is watermarked.
By associating the watermark contained in the leaked data with the receivers, the data sender is able to identify the guilty party.
This scheme was first proposed by Agrawal et al.~\cite{agrawal2002watermarking, agrawal2003watermarking} and was further explored in later works~\cite{sion2005rights, zhang2006relational, 06watermark-outsourced-db, qin2006watermark, zhou2007additive, halder2010persistent, 2016lime}.

However, as pointed out in many existing works~\cite{agrawal2002watermarking, halder2010watermarking, 2016lime}, it is hard to effectively make a watermark on structured data, because it is trivial for attackers to further alter the structured data to spoil the watermark compared with other types of media (e.g., video, image).
For example, an attacker may undermine the watermark by simply removing or adding noise to certain attributes in the dataset.
Even worse, a group of colluding parties can compare their received copies and figure out the watermarked data or the hidden information.
Consequently, these will lead to unexpected detection results, thus making it hard to be the evidence used in a court of law.

To this end, \name does not follow the watermarking direction; instead, \name takes the data allocation approach and utilizes the property of PII data -- PII attributes of a dataset should be kept unmodified in order to keep the property of PII.
For example, if leakers randomize or erase PII attributes, it becomes harder (i.e., requires much more work) to link leaked information to each individual and the leaked data itself is not PII data any more.
Compared with watermarking approaches, \name has a stronger resistance to data manipulation made by leakers.

\subsection{Data Object Allocation}

First proposed in Panagiotis and Hector's work~\cite{modelfordataleakage, data-leakage-detection}, the second category allocates data objects among receivers so that each receiver will obtain a different data object collection from others.
When leaked data is found, the sender can utilize the receiver-objects mapping to evaluate each receiver's probability of data leaking.
Fake but realistic data objects can also be created to facilitate detection.
Following this direction, Ishu and Ashutosh proposed a guilt detection mechanism~\cite{bigraph} that uses a directed bigraph to allocate data objects to receivers.

However, existing allocation mechanisms also have several issues when used in the real world.
\begin{itemize} [leftmargin=*]
	\item An allocation can be disrupted by a collusion among receivers, by which corrupted receivers can learn the allocation and escape being captured by leaking common objects.
	Furthermore, they can strip special data objects that are potentially used to uniquely identify each receiver.
	
	\item Existing allocation approaches didn't consider the scenario where a data sender itself may leak the data intentionally or unintentionally, which hurts the usability and reliability of existing systems.
	For example, the sender can frame a receiver because it knows this receiver's data and thus can leak those data on purpose.
	In turn, a guilty receiver may also frame the sender when a data breach happens.
	
	\item Furthermore, rare existing systems provide immutable proof of sharing, and thus involved parties may deny what has been shared when data is leaked.
\end{itemize}

In \name, we address these three issues by the use of oblivious transfer, knowledge-based identification algorithm, and undeniable proof of sharing.
\section{Conclusion}
\label{sec:conclusion}
In this work, we study the leaker identification problem in PII data sharing and propose \name, a data-sharing system with reliable leaker identification.
Unlike existing solutions, \name allocates data objects obliviously across receivers, immutably records the sharing process, and accurately identifies leakers with \name's knowledge-based algorithms.
In \name, we also utilize fake data objects to compensate the data loss caused by the allocation through 1-out-of-2 OT, thus ensuring a lossless data sharing.

Not limited to PII data sharing, \name can apply to sharing of other data that contains identifiable attributes, e.g., movie name, product ID, etc.

Our work shows that proactively building auditability into the data sharing can greatly facilitate leaker identification in case of a data leakage.
This will in turn prevent malicious parties from illegally selling or disclosing sensitive information.

\bibliographystyle{IEEEtranS}
\bibliography{wsbf.bib}

\end{document}